\pdfoutput=1
\documentclass[10pt,pra,aps,superscriptaddress,nofootinbib,twocolumn,longbibliography]{revtex4-1}
\usepackage{amsmath,bbm}
\usepackage{amsthm}
\usepackage{amsmath}
\usepackage{latexsym}
\usepackage{amssymb}
\usepackage{float}
\usepackage{quantikz}
\usepackage{tikz}
\usepackage{graphicx}           
\usepackage{color}
\usepackage{xcolor}
\usepackage{mathpazo}
\usepackage{comment}
\usepackage{enumitem}
\usepackage{multirow}
\usepackage{setspace}
\usepackage{subcaption}
\usepackage[colorlinks=true,linkcolor=blue,citecolor=magenta,urlcolor=blue]{hyperref}
\usepackage{svg}

\newcommand{\be}{\begin{equation}}
\newcommand{\ee}{\end{equation}}
\newcommand{\bea}{\begin{eqnarray}}
\newcommand{\eea}{\end{eqnarray}}
\def\squareforqed{\hbox{\rlap{$\sqcap$}$\sqcup$}}
\def\qed{\ifmmode\squareforqed\else{\unskip\nobreak\hfil
\penalty50\hskip1em\null\nobreak\hfil\squareforqed
\parfillskip=0pt\finalhyphendemerits=0\endgraf}\fi}
\def\endenv{\ifmmode\;\else{\unskip\nobreak\hfil
\penalty50\hskip1em\null\nobreak\hfil\;
\parfillskip=0pt\finalhyphendemerits=0\endgraf}\fi}

\newcommand{\tr}{\text{Tr}}
\newcommand{\I}{\mathbbm{1}}

\newcommand{\la}{\langle}
\newcommand{\ra}{\rangle}

\makeatletter
\newtheorem*{rep@theorem}{\rep@title}
\newcommand{\newreptheorem}[2]{%
\newenvironment{rep#1}[1]{%
 \def\rep@title{#2 \ref{##1}}%
 \begin{rep@theorem}}%
 {\end{rep@theorem}}}
\makeatother

\newtheorem{thm}{Theorem}
\newreptheorem{thm}{Theorem}
\newtheorem{lemma}{Lemma}

\begin{document}

\title{Entanglement assisted communication complexity measured by distinguishability }


\author{Satyaki Manna}
\affiliation{Department of Physics, School of Basic Sciences, Indian Institute of Technology Bhubaneswar, Odisha 752050, India}
\author{Ankush Pandit}
\affiliation{Department of Physics, School of Basic Sciences, Indian Institute of Technology Bhubaneswar, Odisha 752050, India}
\author{Anubhav Chaturvedi}
\affiliation{Faculty of Applied Physics and Mathematics, Gda\'nsk University of Technology, Gabriela Narutowicza 11/12, 80-233 Gda\'nsk, Poland}
\affiliation{International Centre for Theory of Quantum Technologies (ICTQT), University of Gda\'nsk, 80-308 Gda\'nsk, Poland}
\author{Debashis Saha}
\affiliation{Department of Physics, School of Basic Sciences, Indian Institute of Technology Bhubaneswar, Odisha 752050, India}

\begin{abstract}
We investigate the quantum advantage that can arise in typical two-party communication scenarios, where the sender and the receiver are allowed to share prior correlations. Focusing on communication tasks constrained by the distinguishability of the sender's inputs, we demonstrate that entanglement-assisted communication with both classical and quantum message can outperform classical communication supplemented with shared randomness.
We begin by developing a general framework for communication tasks with pre-shared correlations. Within this framework, we establish an equivalence among entanglement-assisted classical communication, entanglement-assisted quantum communication, and quantum communication, showing that no hierarchy exists between these three paradigms. We then investigate the scenario where the receiver has no input and prove that no advantage can arise in this case. However, an advantage in the entanglement-assisted setting emerges once additional constraints are imposed on the dimension of the communicated message. This further highlights the superiority of entanglement-assisted classical communication over standard quantum communication. Then we demonstrate several tasks where the entanglement-assisted protocol using one-bit communication proves to be advantageous over classical communication.
Finally, by constructing an explicit class of communication tasks, we show that a non-maximally entangled states outperform the maximally entangled state as a pre-shared resource between the communicating parties.
\end{abstract}

\maketitle


\section{Introduction} \label{SEC I} 
Communication complexity is a fundamental topic in information science with applications in cryptography, query complexity, distributed computation and algorithms \cite{ccbook,Yao,DEWOLF}. Traditionally, communication tasks are analyzed by imposing an upper bound on the alphabet size of a classical message or the Hilbert space dimension of a quantum message. Numerous results have demonstrated that a lower-dimensional quantum system can accomplish certain tasks that would otherwise require a higher-dimensional classical system \cite{fingerp,saha2019,saha2023,Saha_2019,saha.062210,Chaturvedi}. A more recent perspective instead constrains the distinguishability of the communicated message \cite{Chaturvedi2020quantum,PRR_24,pandit2025,pandit2,Tavakoli2022}. Distinguishability is defined as the maximum average probability of correctly guessing the sender's input from the transmitted message \cite{Hellstrom,herzog,benett2,AYuKitaev_1997,piani,watrous,acin2,npj,duan,manna2026,manna2025}. It quantifies how much information about the sender's data is revealed through communication. A higher distinguishability requirement therefore, implies greater information conveyed about the sender's input and consequently a larger communication cost, whereas lower distinguishability indicates reduced communication requirements. While dimension provides only a discrete and partial characterization of communication resources, distinguishability is a continuous measure that offers a more complete description. Moreover, minimizing distinguishability becomes particularly important when the sender’s input must remain confidential. 

Entanglement is widely regarded as one of the most intriguing and distinctive features of quantum theory. One of its most prominent applications is its role as a nonlocal resource in Bell-type scenarios \cite{bell,ardehali,mermin,seevnick}, where spatially separated parties are not allowed to communicate yet can generate correlations that cannot be explained classically. In standard communication tasks, the role and potential advantage of entanglement have not been greatly explored. Recent studies have shown that the introduction of entanglement can enlarge the set of achievable correlations between a sender and a receiver in scenarios where communication is constrained by the dimension of the transmitted systems \cite{Marcin,Brukner,armin,tavakoli,hameedi,5550293,Frenkel2022,Vieira_2023,jef,Patra}.  
However, whether entanglement can offer a similar advantage in communication tasks constrained by distinguishability remains unexplored. This question is particularly relevant, as distinguishability constraint is defined relative to pre-shared correlations, and it provides a natural and unbiased framework for studying quantum communication complexity with various resources. 

 In a seminal work, Frenkel and Weiner~\cite{fw} showed that, in dimension-constrained communication tasks without any input at the receiver’s side, quantum communication offers no advantage over classical strategies. Despite this limitation, subsequent results by Frenkel and Weiner~\cite{Frenkel2022} and Vieira \textit{et al.}~\cite{Vieira_2023} revealed that entanglement-assisted classical communication can outperform classical protocols in dimension-limited scenarios. These developments naturally raise the question of whether a similar advantage persists in distinguishability-constrained communication tasks. The answer in this case appears to be negative. However, by introducing communication tasks constrained simultaneously by both the distinguishability of the sender’s inputs and the dimension of the communicated message, we show that such an advantage can, in fact, be recovered.

We begin by describing a general two-party communication task under the constraint on the distinguishability of the sender’s inputs. This notion has been previously developed in Refs.~\cite{Chaturvedi2020quantum, PRR_24}. We then evaluate the explicit form of this constraint in four communication scenarios: $(i)$ classical communication with shared randomness, $(ii)$ quantum communication, $(iii)$ entanglement-assisted classical communication, and $(iv)$ entanglement-assisted quantum communication.
Notably, in the first scenario, the constraint reduces to the same form as in classical communication without shared randomness, as discussed in Ref.~\cite{PRR_24}. One may therefore think of taking this classical constraint without shared randomness. In this case, we show that the communication complexity becomes trivial. The receiver can always succeed when the sender’s inputs are minimally distinguishable.
In this work, our primary focus is on quantifying the advantage of entanglement-assisted communication (both classical and quantum) over classical communication. We capture this advantage in two distinct ways. First, we consider the ratio between the distinguishability achievable in classical communication and that in entanglement-assisted classical (or entanglement-assisted quantum) communication, while fixing the success metric to be the same in both scenarios. Second, we fix the distinguishability in the respective communication settings and compare the corresponding success metrics by taking the ratio between the entanglement-assisted scenario and the classical one. In both approaches, a ratio exceeding $1$ signifies an advantage arising from entanglement assistance.

Turning to our main results, we first establish several useful lemmas that demonstrate the equivalence of three scenarios: quantum communication, entanglement-assisted classical communication, and entanglement-assisted quantum communication. This equivalence implies that no task can induce a hierarchical separation among these three protocols. In other words, advantage in one scenario indicates the same advantage in other two scenarios. Then we address the class of communication task where the receiver does not have input. We prove that all three protocols can not provide advantage with respect to classical communication in this case.
Then we introduce a different class of communication protocols that simultaneously constrain the distinguishability of the sender’s inputs and the dimension of the communicated message. Within this restricted framework, we show that entanglement-assisted classical communication and entanglement-assisted quantum communication provides an advantage with respect to not only classical communication but also standard quantum communication even if there is no input on receiver.

Next, we establish an equivalence between entanglement-assisted classical communication and quantum communication without pre-shared entanglement when the communicated message is restricted to one bit. In particular, we show that any qubit protocol in the latter scenario, where the receiver performs projective measurements, can be simulated within the former framework.
We then analyze three important classes of communication tasks: random access codes, equality problems defined via graphs, and pairwise distinguishability tasks. These tasks were shown in Ref.~\cite{PRR_24} to exhibit an advantage with respect to classical communication in qubit binary-output settings. By exploiting our equivalence relations, we demonstrate that any advantage observed in quantum communication can be translated into an advantage in entanglement-assisted classical communication, and consequently, into entanglement-assisted quantum communication as well. It is important to note that we allow infinite amount of messages in classical communication, where as entanglement assisted scenario is advantageous with one classical bit transmission.
Finally, we provide an explicit example of a communication task in which a pre-shared non-maximally entangled state yields a strictly greater advantage than a maximally entangled state.

The paper is organized as follows. In the next section, we explicitly formulate the distinguishability constraint and the corresponding success probability for four classes of communication scenarios. We then present several general results that establish connections among different types of communication protocols. Next two sections provide the advantageous entanglement assisted protocols over both classical and quantum communication when the dimension of communicated message is also restricted. Finally, we conclude by summarizing our main findings and outlining possible directions for future research.
\section{General Communication Task}
First, we introduce the notation $[K]$ to denote the set $\{1,\ldots,K\}$ for any positive integer $K$. In a one-way communication-complexity task, Alice (the sender) receives an input variable $x \in [X]$ drawn according to a distribution $\{p_x\}$, where for the uniform case $p_x=1/X$. In each round, conditioned on the value of $x$, Alice sends a message---classical or quantum---to Bob (the receiver). Bob is independently given an input variable $y \in [Y]$, and upon receiving Alice’s message, he produces an output $z \in [Z]$. Repeating the task many times yields empirical conditional probabilities $p(z|x,y)$.

The goal is to maximize a success metric of the form
\be\label{SGen}
\mathcal{S} = \sum_{x,y,z} c(x,y,z)\, p(z|x,y),
\ee
where $c(x,y,z)$ are weights associated with the respective probabilities.

The communication is constrained by the distinguishability of Alice’s inputs. This distinguishability is defined as the optimal average probability of correctly guessing $x$ from the communicated message, optimized over all admissible measurements:
\be\label{pD1}
\mathcal{D} = \max_{M}\left\{ \sum_x p_x\, p(z=x|x,M) \right\}.
\ee
If $\mathcal{D}=1$, Alice could perfectly encode and transmit the input $x$ itself (for example, using an $N$-dimensional classical system), trivializing the task. Hence, the task is meaningful only when $\mathcal{D}<1$. It is important to note that the distinguishability $\mathcal{D}$ is considered with respect to Bob (the receiver) with whom Alice may share classical or quantum correlations. We can define $\mathcal{D}$ in two distinct ways. First, it may quantify distinguishability with respect to both the communicated message and the pre-shared resources shared between the parties. Alternatively, one may define a notion of distinguishability that depends solely on the communicated message, disregarding any pre-shared resources.

Typically, the success metric in communication-complexity problems corresponds to correctly guessing a function of the inputs $x$ and $y$, leading to the form
\begin{equation}\label{Sgf}
\mathcal{S} = \sum_{x,y} c(x,y)\, p\bigl(z = f(x,y)\,\big|\, x,y\bigr).
\end{equation}
The expression in \eqref{Sgf} is thus a specific instance of the general form in \eqref{SGen}, where the output $z$ is required to equal a particular function $f(x,y)$. In this work, we focus on a family of tasks whose success metric is of the form \eqref{Sgf}. We next describe how such a communication task is instantiated in different frameworks and clarify the notion of \emph{quantum advantage} within each framework.

\subsection{Classical Communication with shared randomness (CC)}

 Alice transmits a $d$-labelled message $m'$ depending on the input $x$ and shared randomness $\lambda$, where $\sum_\lambda p(\lambda)=1$. Any encoding strategy is described by probability distributions of sending $m'$ given input $x$ and shared randomness $\lambda$, $\{p_e(m'|x,\lambda)\}$, where $\sum_{m'} p_e(m'|x,\lambda)=1$ for all $x$. It is important to note that there is no restriction on the dimension $d$, implying $m'$ can take an arbitrarily large number of distinct values.\\

 The decoding strategy for providing Bob's output is defined by the probability distributions over output $z$ given the message $m'$, Bob's inputs $y$ and shared randomness $\lambda$ denoted as $\{p_d(z|y,m',\lambda)\}$, where $\sum_z p_d(z|y,m',\lambda)=1$ for all $m',y$. For this kind of classical encoding and decoding, the resulting conditional probability is expressed as follows:
\be \label{pC}
p(z|x,y) = \sum_{m',\lambda }p(\lambda)p_e(m'|x,\lambda) p_d(z|y,m',\lambda).
\ee  
The distinguishability constraint on the sender’s inputs can be formulated in two ways. First, one may define distinguishability by taking into account both the communicated message and the shared randomness. Given an encoding, the distinguishability of sender's inputs in this scenario,
\bea\label{5}
\mathcal{D}_C &=& \max_{\{p_d(z|m',\lambda)\}}\sum_{m',\lambda}\sum_{x} p_x p(\lambda)p_e(m'|x,\lambda) p_d(z=x|m',\lambda).\nonumber\\
\eea
As $\sum_z p(z=x|m',\lambda)=1$ for all $m'$, the above definition can be presented as, 
 \bea\label{dcc11}
  \mathcal{D}_C            &=& \sum_{m',\lambda}\max_{x}\bigg\{ p_x p(\lambda)p_e(m'|x,\lambda)\bigg\}.
\eea
Using Bayes rule for conditional probability, we can write,  $p_e(m'|x,\lambda)=p_e(m',\lambda|x)/p(\lambda|x)$. As input $x$ is randomly chosen, $\lambda$ and $x$ are not correlated, i.e, $p_e(m',\lambda|x)=p(\lambda)p_e(m'|x,\lambda)$. Leveraging this into \eqref{dcc11}, it becomes,
\bea\label{dcc1}
 \mathcal{D}_C &=& \sum_{m',\lambda}\max_{x}\bigg\{ p_x p_e(m',\lambda|x)\bigg\} .
 \eea

For the uniform distribution $p_x=1/X$, the above simplifies to
\be \label{DC1}
\mathcal{D}_C =
\frac{1}{X}\sum_{m',\lambda}\max_{x}\bigg\{ p_e(m',\lambda|x)\bigg\}.
\ee 

As we are not imposing any restriction on the dimension of $m'$, the variable for shared randomness can be absorbed into the message $m'$. So, \eqref{DC1} can be written as, \\
\be \label{DC}
\mathcal{D}_C =
\frac{1}{X}\sum_{m}\max_{x}\bigg\{ p_e(m|x)\bigg\},
\ee 
where $(m',\lambda)=m.$ So classical distinguishability is the same as the case that is without shared randomness.
Hence, the conditional probability can be expressed as,
\bea
p(z|x,y) = \sum_m\sum_{x,y} p_e(m|x) p_d(z|y,m),
\eea
and consequently, the success merit,
\bea \label{scd}
\mathcal{S}_C &=& \max_{\substack{\{p_e(m|x)\},\\ \{p_d(z|y,m)\} }} \sum_{x,y,z} \sum_m c(x,y,z)  p_e(m|x) p_d(z|y,m) \nonumber\\
\eea
As $\sum_z p_d(z|y,m)=1$, for all $y,m$, the above expression can be expressed with only the encoding strategy, which is described below.
\be\label{scd1}
\mathcal{S}_C = \max_{\{p_e(m|x)\} } \sum_{y,m} \max_z \left\{  \sum_x c(x,y,z)  p_e(m|x)   \right\}.
\ee 

On the other hand, if we restrict the distinguishability to that which depends only on the communicated message, it can be defined mathematically as
\bea\label{eve_d}
\mathcal{D}'_C &=& 
 \sum_{m'}\max_{x}\bigg\{ p_x p_e(m'|x)\bigg\}.
\eea
It is important to note that imposing a constraint on this distinguishability, \(\mathcal{D}'_C\), does not render the task non-trivial.
\begin{lemma}\label{1}
   In classical communication, even if the distinguishability associated with the classical message—without conditioning on shared randomness—is constrained to its minimum value, there still exists an encoding strategy through which Bob can recover complete information about the input $x$.
\end{lemma}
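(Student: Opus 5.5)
The plan is to give an explicit one-time-pad construction. Shared randomness is used as a secret key, so that the message $m'$ alone carries no information about $x$. The relevant quantity is $\mathcal{D}'_C$ in \eqref{eve_d}, which is computed from the marginal $p_e(m'|x)=\sum_\lambda p(\lambda)p_e(m'|x,\lambda)$. For the uniform prior $p_x=1/X$, this distinguishability is bounded below by $1/X$, since
\[
\sum_{m'}\max_x \tfrac{1}{X}p_e(m'|x)\ge \tfrac{1}{X}\sum_{m'}\tfrac{1}{X}\sum_x p_e(m'|x)=\tfrac{1}{X}.
\]
Equality holds precisely when $p_e(m'|x)$ is independent of $x$. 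For a general prior the minimum is $\max_x p_x$, attained under the same condition. Minimal distinguishability is therefore equivalent to the message marginal being independent of the input.

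For the construction, I would take the shared randomness $\lambda$ uniformly distributed over $[X]$, with $p(\lambda)=1/X$, and use an $X$-labelled message. Alice deterministically sends
\[
m' = (x+\lambda) \bmod X,
\]
so that $p_e(m'|x,\lambda)=\delta_{m',\,x+\lambda \bmod X}$. For each fixed $x$, the map $\lambda\mapsto x+\lambda$ is a bijection of $\mathbb{Z}_X$, which gives
\[
p_e(m'|x)=\tfrac{1}{X}\quad\text{for all } m',x.
\]
Substituting this into \eqref{eve_d} yields $\mathcal{D}'_C=\sum_{m'}\tfrac{1}{X}\max_x p_x=\max_x p_x$, which is the minimum value ($1/X$ in the uniform case).

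Bob decodes deterministically by outputting $\hat x=(m'-\lambda)\bmod X$. This recovers $x$ with probability one for every $x$ and $\lambda$. He can then output $z=f(x,y)$ for any $y$, so any success metric of the form \eqref{Sgf} attains its algebraic maximum. It is also worth remarking that the full distinguishability $\mathcal{D}_C$ in \eqref{dcc1} equals $1$ for this protocol. This shows that constraining $\mathcal{D}'_C$ rather than $\mathcal{D}_C$ trivializes the task, and motivates adopting $\mathcal{D}_C$ in what follows. No step here is a real obstacle.
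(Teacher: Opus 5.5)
Your proposal is correct and is essentially the paper's own proof: the same one-time-pad encoding $p_e(m'|x,\lambda)=\delta_{m',x\oplus_X(\lambda-1)}$ with uniform $\lambda\in[X]$, which gives $p_e(m'|x)=1/X$, hence $\mathcal{D}'_C=\max_x p_x$ while $\mathcal{D}_C=1$. Your explicit decoder $\hat x=(m'-\lambda)\bmod X$ only makes the last step concrete. One caveat on your side remark: for a non-uniform prior, input-independence of $p_e(m'|x)$ is sufficient but not necessary for attaining $\mathcal{D}'_C=\max_x p_x$, so the claimed ``equivalence'' holds only in the uniform case---this does not affect the lemma, which needs only existence.
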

\begin{proof}
We provide an encoding strategy where $\mathcal{D}'_C$ is the minimum , i.e., $\mathcal{D'}_C=\max_x\{p_x\}$ but distinguishability with respect to Bob, i.e., $\mathcal{D}_C$ becomes $1$.
Consider an encoding strategy with  $m'\in [X],\ \lambda\in[X]$ and $p(\lambda)=1/X$. Alice executes the strategy such that $p_e(m'|x,\lambda)=\delta_{m',x\oplus_X(\lambda-1)}$, where $\oplus_X$ denotes modulo sum $X$. In this strategy, for all $x$,
\bea
p_e(m'|x) &=& \sum_\lambda p(\lambda)p(m'|x,\lambda)\nonumber\\
&=& \frac1X \sum_\lambda \delta_{m',x\oplus_X(\lambda-1)}=\frac1X .
\eea
Substituting this in \eqref{eve_d}, we get
\bea
\mathcal{D}'_C &=& \frac1X\sum_{m'}\max\{p_x\}=\max\{p_x\}.
\eea
However, the quantity
\be
\max_x\{p_e(m'|x,\lambda)\} = \max_x\{\delta_{m',x\oplus_X(\lambda-1)}\}
= 1, 
\ee
for all $\lambda,m'.$
Replacing the above into \eqref{dcc11}, we find
\bea
\mathcal{D}_C &=& \sum_{m',\lambda}p(\lambda) \max_x\{p_x\delta_{m',x\oplus_X(\lambda-1)}\}\nonumber\\
&=&\sum_{m',\lambda} p(\lambda)p_x=1.
\eea
Therefore, Bob can identify Alice's inputs perfectly, and classical communication reaches the highest success probability of the task.   
\end{proof}

\subsection{Quantum Communication (QC)}
In a quantum communication scenario, Alice transmits a quantum state 
$\rho_x$ acting on $\mathbb{C}^d$, and Bob produces an outcome $z$ by 
performing a quantum measurement described by a collection of 
positive semidefinite operators $\{N_{z|y}\}_{z,y}$, where 
$z \in [Z]$ and $y \in [Y]$. This leads to the conditional statistics
\begin{equation} \label{pD}
    p(z|x,y) = \operatorname{Tr} \!\left( \rho_x N_{z|y} \right).
\end{equation}
For the circuit diagram, check FIG. \ref{figa}.

The distinguishability of the sender’s ensemble of quantum states 
$\{\rho_x\}$ is defined as
\begin{equation} \label{DistinguishabilityQuantum}
\begin{aligned}
 \mathcal{D}_Q= & \mathcal{DS}[\{\rho_x\}_x,\{1/X\}_x]\\
    = & \underset{\{Q_x\}}{\max}\ \  
    \frac1X \sum_x \, \operatorname{Tr}\!\left(\rho_x Q_x\right),
\end{aligned}
\end{equation}
where $\{Q_x\}_x$ are POVM (Positive Operator Valued Measurement) elements of the optimum measurement to distinguish the states $\{\rho_x\}_x$ sampled from equal probability distribution.

While such quantum correlation has been studied in \cite{PRR_24}, in this work, our main focus is on the entanglement assisted quantum correlations.
\subsection{Entanglement Assisted Classical Communication (EACC)}
In this communication task, Alice and Bob share an arbitrary pure entangled state $\sigma_{AB}\in \mathcal{H}_A\otimes\mathcal{H}_B$. Alice receives an input $x\in[X]$ and encodes her input using her part of entangled state $\rho_{AB}$ and send the $d$-labelled classical message $m$. Bob receives the input $y$ and depending on $y$ and $m$, he executes a measurement on his part of the entangled state and gives the output $z$. 

 Alice can use a general $A$-outcome POVM  described by the positive operators $\{M_{a|x}\}_x$ for encoding any input $x$ and $a\in[A]$. After implementing the specific measurement, Alice will send some classical message $m$ to Bob. Alice's message is an arbitrary function $f(a,x)$ of the output $a$ and the input $x$. The schematic representation of the task is provided in FIG. \ref{figb}.

The information available to Bob is composed of the classical message of Alice and reduced state at Bob's side after Alice's measurement. The reduced state at Bob's side is,
 \be \label{rhoBax}
\rho^B_{a|x} = \tr_A\left[\frac{\left(M_{a|x}\otimes \I\right)\sigma_{AB}\left(M^\dagger_{a|x}\otimes \I\right)}{p(a|x,\rho)}  \right],
\ee 
where $p(a|x,\rho) = \tr\left( \sigma_{AB} \left( M_{a|x}^\dagger M_{a|x} \otimes \I \right) \right)$.
 Depending on Alice's classical message $m$, Bob can perform a general $Z$-outcome measurement $\{N_{z|y,f(a,x)}\}_{y,f(a,x)}$ for some input $y$ on the reduced state of his side and $z\in[Z]$. The resulting correlation in this scenario becomes, 
\be 
p(z|x,y)=\sum_{a} \tr\left[\left(M_{a|x}\otimes N_{z|y,f(a,x)}\right)\sigma_{AB}\right].
\ee

 As Alice knows the shared state, the communication constraint can be defined as following:
\begin{widetext}
\bea 
\mathcal{D}^\sigma_{EACC}&=&\sum_x p_x p(z=x|x) \nonumber\\
&=& \sum_{x,a} p_x p\left(a|x,\sigma\right)p\left(b=x|a, f(a,x)\right)\nonumber\\
&=&  \sum_{a} \max_{\{Q'_{b|a,f(a,x)}\}} \sum_{x} p_x p\left(a|x,\sigma\right) \mbox{Tr} \left(\sigma^{B}_{a|x}\otimes \ket{f(a,x)}\bra{f(a,x)} Q_{b=x|a,f(a,x)}\right)\nonumber\\
&=&\sum_a \mathcal{DS}\left[\left\{\sigma^{B}_{a|x}\otimes \ket{f(a,x)}\!\bra{f(a,x)} \right\}_x,\left\{p_x p\left(a|x,\sigma\right) \right\}_x\right].
\eea    
\end{widetext}
 If $f(a,x)$ is independent of $x$, then for every $a$, the constraint depends on the distinguishability of states 
  $\sigma^{B}_{a|x}\otimes \ket{f(a)}\!\bra{f(a)}$. 
  For every $a$, it becomes the distinguishability of $\sigma^{B}_{a|x}$ with an ancillary state $\ket{f(a)}\!\bra{f(a)}$. Henceforth, the above quantity reduces to,
\bea\label{DME_sig}
\mathcal{D}^\sigma_{EACC}&=& \sum_a \mathcal{DS}\left[\left\{\rho^{B}_{a|x} \right\}_x,\left\{p_x p\left(a|x,\sigma\right) \right\}_x\right].\nonumber\\
\eea

\begin{thm}\label{th1}
    In this scenario, if Alice's inputs are random (i.e., $p_x=\frac1N$) and she uses the maximally entangled state $\ket{\phi_d^+}=\frac{1}{\sqrt{d}}\sum_{i=1}^d\ket{ii}$ and sends the outcome as the classical message to Bob, the highest value of  $\mathcal{D}^{\phi_d^+}_{EACC}$ for rank one projective measurements is $d/N$, where $N$ is the number of inputs. 
\end{thm}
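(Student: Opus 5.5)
Our approach is to compute $\mathcal{D}^{\phi_d^+}_{EACC}$ from \eqref{DME_sig} directly, bound each term $\mathcal{DS}$ by a trace-norm / operator-norm estimate, and then exhibit a strategy that meets the bound. Alice measures her half of $\ket{\phi_d^+}$ with a rank-one projective measurement $\{P_{a|x}=\ket{\psi_{a|x}}\!\bra{\psi_{a|x}}\}_{a=1}^d$ and sends $m=a$. For the maximally entangled state, $p(a|x,\phi_d^+)=\tr(P_{a|x})/d=1/d$ for every $a,x$. Bob's post-measurement state is
$$\rho^B_{a|x}=\ket{\psi^*_{a|x}}\!\bra{\psi^*_{a|x}},$$
which is pure and equal to the complex conjugate of Alice's projector. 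Substituting into \eqref{DME_sig} with $p_x=1/N$ gives
\begin{equation}
\mathcal{D}^{\phi_d^+}_{EACC}=\sum_{a=1}^d \max_{\{Q_{x|a}\}}\frac{1}{Nd}\sum_x \bra{\psi^*_{a|x}}Q_{x|a}\ket{\psi^*_{a|x}} .
\end{equation}

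For the upper bound, fix $a$. Since $\ket{\psi^*_{a|x}}\!\bra{\psi^*_{a|x}}\le \I$, we have $\bra{\psi^*_{a|x}}Q_{x|a}\ket{\psi^*_{a|x}}\le \tr(Q_{x|a})$ for any POVM element. Summing over $x$ gives $\sum_x \tr(Q_{x|a})=\tr\I=d$. Hence each $a$-term is at most $d/(Nd)=1/N$, and summing over the $d$ values of $a$ gives $\mathcal{D}^{\phi_d^+}_{EACC}\le d/N$.

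For achievability, assume $N\ge d$ so that $d/N\le 1$; otherwise the bound is simply $1$. We want each $a$-term to saturate the bound, which requires $\sum_x |\la\psi^*_{a|x}|Q_{x|a}|\psi^*_{a|x}\ra| = d$. The plan is to choose the $N$ bases so that, for each fixed $a$, the vectors $\{\ket{\psi^*_{a|x}}\}_x$ contain $d$ mutually orthogonal vectors. A concrete choice is to partition the inputs into $d$ groups and let Alice use the computational basis relabelled by a cyclic shift depending on the group, so that $\ket{\psi_{a|x}}=\ket{a\oplus_d g(x)}$. Bob then measures in the computational basis and assigns the outcome $j$ to one representative $x$ with $g(x)=a\ominus j$. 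This yields exactly $d$ perfectly distinguished terms per $a$, giving $d\cdot d/(Nd)=d/N$.

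The main obstacle is the tightness step. We must check that the cyclic-shift assignment is consistent, meaning each outcome $j$ corresponds to a valid representative $x$ for every $a$. We must also handle $N$ not divisible by $d$, or $N<d$, which we would treat separately by capping at $1$. The upper bound itself is routine once we use purity of $\rho^B_{a|x}$ and $p(a|x)=1/d$.
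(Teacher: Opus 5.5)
\textbf{Comparison with the paper.} Your upper bound is the same argument as the paper's. The paper bounds $\tr(\rho^B_{a|x}Q_{x|a})\le \tr(\rho^B_{a|x})\,\tr(Q_{x|a})$, which is your $\rho^B_{a|x}\le\I$ step, and then uses $\sum_x\tr Q_{x|a}=d$ over the $d$ outcomes $a$ to get $d^2/(Nd)=d/N$. The difference is that the paper stops there: it proves only $\mathcal{D}^{\phi_d^+}_{EACC}\le d/N$ and never shows that this value is attained. Your cyclic-shift construction is therefore an addition that makes the claim of a \emph{highest value} actually hold, not a different route.

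\textbf{Why the construction works.} Take $|\psi_{a|x}\rangle=|a\oplus_d g(x)\rangle$. These vectors are real, so Bob's conditional state is the same vector. Measuring in the computational basis, Bob obtains $a\oplus_d g(x)$, and knowing $a$ he learns exactly $g(x)$. Alice's uniformly random outcome acts as a one-time-pad key. Guessing one fixed representative per class then gives $d\cdot d/(Nd)=d/N$.

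\textbf{Fixes needed.}
\begin{enumerate}
\item The representative assigned to outcome $j$ must satisfy $g(x)=j\ominus_d a$, not $a\ominus_d j$. For $d>2$ your version pairs outcome $j$ with the state $|2a-j\rangle$, so those terms vanish.
\item The obstacles you list are not real. Divisibility of $N$ by $d$ is irrelevant, and there is no consistency issue. All you need is that $g$ maps $[N]$ onto $\{0,\dots,d-1\}$, which is possible exactly when $N\ge d$.
\item For $N<d$ the theorem as stated overshoots, since distinguishability cannot exceed $1$. An injective $g$ attains $1$, so the correct statement is that the maximum equals $\min\{1,d/N\}$.
\end{enumerate}
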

\begin{proof}
      As the outcomes of the measurements are random, Alice's classical message does not increase the value of distinguishability of Alice's measurements.
     We want to find the partial distinguishability of $N$ projective measurements defined as $\left\{\ket{\eta_{a|x}}\bra{\eta_{a|x}}\right\}_{a,x}$ when the probe is maximally entangled state, i.e., $\ket{\phi_d^+}=\frac{1}{\sqrt{d}}\sum_i\ket{ii}$.
    Partial distinguishability in this case follows as from \eqref{DME_sig}:
    
    \bea \label{DME_th1}
 \mathcal{D}^{\phi_d^+}_{EACC}
&=& \frac{1}{N} \sum_a \mathcal{DS}\left[\left\{\ket{\eta_{a|x}}^*\right\}_x,\left\{p(a|x,\ket{\phi_d^+}) \right\}_x\right]\nonumber\\
&=&  \frac{1}{N} \sum_a \mathcal{DS}\left[\left\{\ket{\eta_{a|x}}^*\right\}_x,\left\{\frac1d \right\}_x\right].
\eea    

It can be easily checked that $p(a|x,\ket{\phi_d^+})=\frac1d$, $\forall a,x$ as $\{\eta_{a|x}\}_{a,x}$ is projective measurement and maximally entangled state can be written in any basis. By a straight forward calculation, \eqref{DME_th1} can be written as,
 \bea \label{dme_th1_2}
 \mathcal{D}^{\phi_d^+}_{EACC}
&=&\frac1N\sum_a\left(\sum_x\frac1d\tr\left(\ket{\eta_{a|x}}^*\!\bra{\eta_{a|x}}^* Q_{x|a}\right) \right),\nonumber\\
\eea    
where $\{Q_{x|a}\}_x$ are the POVM elements of the optimum measurement to distinguish the set of states $\left\{\ket{\eta_{a|x}}^*\right\}_x$ for a fixed $a$.
The eq. \eqref{dme_th1_2} can be written as,
\bea
&& \mathcal{D}^{\phi_d^+}_{EACC}\left[\left\{\ket{\eta_{a|x}}\!\bra{\eta_{a|x}}\right\}_{a,x},\left\{\frac{1}{N}\right\}_x\right]  \nonumber\\
&\leqslant & \frac1N\sum_a\sum_x\frac1d\tr\left(\ket{\eta_{a|x}}^*\!\bra{\eta_{a|x}}^*\right) \tr(Q_{x|a})\nonumber\\
&= & \frac{1}{Nd}\sum_a\sum_x\tr(Q_{x|a})\nonumber\\
&=&\frac{d}{N}.
\eea
As $\{Q_{x|a}\}_x$ are valid POVM elements, $\sum_x Q_{x|a} =\I$ for a fixed $a$ and that follows $\sum_x\tr(Q_{x|a})=d$ and the sum over $a$ ultimately gives the value $d^2$.
\end{proof}

 Note that, this proof can be extended for the cases where $p(a|x,\phi_d^+)$ are same for all $x$. This is possible if $p(a|x,\phi_d^+)=k/N$, where $k$ is the rank of $\ket{\eta_{a|x}}^*$ for each $a$.

\emph{Distinguishability in adversarial scenario.---} One can think of an \emph{adversarial scenario} where a third party (referee) chooses the entangled state $\rho_{AB}$ for the communication task. The information of shared entangled state is hidden from the parties. As Alice does not know about the entangled state, she will pursue the best strategy to hide her inputs as well as to complete the task. The distinguishability takes the following form:
\begin{widetext}
\bea \label{DME1}
 \mathcal{D}_{EACC}&=&\sum_x p_x p(z=x|x) \nonumber\\
&=& \sum_{x,a} p_x p\left(a|x,\rho\right)p\left(b=x|a, f(a,x)\right)\nonumber\\
&=& \max_{\rho_{AB}} \sum_{a} \max_{\{Q_{b|a,f(a,x)}\}} \sum_{x} p_x p\left(a|x,\rho\right) \mbox{Tr} \left(\rho^{B}_{a|x}\otimes \ket{f(a,x)}\bra{f(a,x)} Q_{b=x|a,f(a,x)}\right)\nonumber\\
&=& \max_{\rho_{AB}} \sum_a \mathcal{DS}\left[\left\{\rho^{B}_{a|x}\otimes \ket{f(a,x)}\bra{f(a,x)} \right\}_x,\left\{p_x p\left(a|x,\rho\right) \right\}_x\right].
\eea    
\end{widetext}

 Now, if $f(a,x)$ is independent of $x$, then the message does not contain any information of $x$ and consequently, \eqref{DME1} simplifies to
\bea
\mathcal{D}_{EACC}&=& \max_{\rho_{AB}} \sum_a \mathcal{DS}\left[\left\{\rho^{B}_{a|x} \right\}_x,\left\{p_x p\left(a|x,\rho\right) \right\}_x\right].\nonumber\\
\eea
It is interesting to note that this is nothing but the distinguishability of measurements $\{M_{a|x}\}_{a,x}$\cite{Manna_111,Datta_2021,ziman}.

\subsection{Entanglement Assisted Quantum Communication (EAQC)}
In this communication task, Alice and Bob share an arbitrary entangled state $\sigma_{AB}\in \mathcal{H}_A\otimes\mathcal{H}_B$. Alice receives an input $x\in[X]$.
 Alice can use a local CPTP map $\{E_{x}\}_x$ for encoding any input $x$. for every $x$, Kraus representation of a quantum channel  is defined by a set of operators $\{E_{a|x}\}_x$,  where $\sum_a E^\dagger_{a|x} E_{a|x}=\I$. After implementing the specific channel, Alice send her subsystem to Bob.
  Now the receiver holds both shares, one which Alice sent and another his share. He performs a measurement $\{N_{z|y}\}_{y}$ depending on his input $y$. The
correlations are given by the Born rule,
\bea  
 p(z|x,y) &=& \tr \left[\left(\sum_a(E_{a|x}^\dagger\otimes\I)\sigma_{AB}(E_{a|x}\otimes\I)\right)N_{z|y}\right].\nonumber\\
\eea

Note that if $\sigma^A_x$, the reduced density matrix of Alice, acts in $\mathbbm{C}^d$, then the output state acts on $\mathbbm{C}^{d'}$ that may not be in the same dimension. Thus, the Kraus operators acts as $\mathbbm{C}^d\rightarrow \mathbbm{C}^{d'},$ and the measurements by Bob $N_{z|y}$ acts on $\mathbbm{C}^{d'd}$. The schematic diagram is presented in FIG. \ref{figc}.



The communication constraint of this scenario is described as,
\bea
&&\mathcal{D}^\sigma_{EAQC}\nonumber\\
&=& \max_{\{T'_{b}\}_b}\sum_x p_x \tr \left[\left(\sum_a(E_{a|x}^\dagger\otimes\I)\sigma_{AB}(E_{a|x}\otimes\I)\right)Q_{b=x}\right]\nonumber\\
&=& \mathcal{DS}\left[\left\{\sum_a(E_{a|x}^\dagger\otimes\I)\sigma_{AB}(E_{a|x}\otimes\I)\right\}_x,\left\{p_x\right\}\right].\nonumber\\
\eea

Under the alternative definition of distinguishability, the communication task trivializes in a manner similar to CC. In particular, one can establish an analogue of Lemma \ref{1} to certify this behavior, with the additional feature that the number of communicated messages required is strictly smaller than in the corresponding CC scenario. In the EAQC setting, the parties share a maximally entangled state. Alice encodes her input $x \in [X]$ via unitary operations of dimension $\lceil \sqrt{X} \rceil$, following the dense coding protocol~\cite{benett}. Under the alternative definition, where distinguishability depends only on the communicated message, these unitaries-evolved states correspond to $X$ maximally mixed states, yielding distinguishability $1/X$. In contrast, $\mathcal{D}^{\phi^+}_{\mathrm{EAQC}} = 1$, since the post-encoding states form a set of $X$ mutually orthogonal states. Hence, Bob can perfectly recover Alice's input and achieve the optimal success probability.

We present a general lemma about the distinguishability of unitary channels. This would be helpful in our results demonstrating the advantage in EAQC protocol.
\begin{lemma}\label{lem1}
    In this scenario, if Alice's inputs are random and she uses the $d$-dimensional unitary channels to encode her messages and the parties share entangled state of dimension $d\times d$, the highest value of $\mathcal{D}^\sigma_{EAQC}$ is $d^2/N$, where $N$ is the number of inputs of Alice.
\end{lemma}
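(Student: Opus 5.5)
The approach is to bound $\mathcal{D}^\sigma_{EAQC}$ by the standard ``pretty'' trace bound for ensembles of $N$ states living in a $d^2$-dimensional space. When Alice applies a unitary $U_x$ on her $d$-dimensional share of $\sigma_{AB}$, the post-encoding states are
\[
\sigma_x=(U_x\otimes\I)\,\sigma_{AB}\,(U_x^\dagger\otimes\I),
\]
all acting on $\mathbb{C}^d\otimes\mathbb{C}^d$. By the definition of $\mathcal{D}^\sigma_{EAQC}$ with $p_x=1/N$,
\[
\mathcal{D}^\sigma_{EAQC}=\max_{\{Q_x\}}\frac1N\sum_x\tr(\sigma_x Q_x).
\]
This mirrors the proof of Theorem \ref{th1}. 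For each $x$ we use $\tr(\sigma_x Q_x)\le \|\sigma_x\|_\infty\tr(Q_x)\le \tr(Q_x)$, since $\sigma_x$ is a density operator with largest eigenvalue at most one. Summing over $x$ and using the POVM completeness $\sum_x Q_x=\I_{d^2}$ then gives
\[
\mathcal{D}^\sigma_{EAQC}\le \frac1N\sum_x\tr(Q_x)=\frac{d^2}{N}.
\]
This is the upper bound. It holds for every shared state $\sigma_{AB}$ and every choice of unitaries, pure or mixed.

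Next, we show the bound is attained whenever $N\le d^2$, which is the regime where $d^2/N\le 1$ is meaningful. For $N>d^2$ one should read the statement as $\min\{1,d^2/N\}$; in fact $d^2/N$ is then still attained. We take $\sigma_{AB}=\ket{\phi_d^+}\!\bra{\phi_d^+}$ and let the $U_x$ be $N$ distinct elements of the Heisenberg--Weyl (clock-and-shift) basis $\{X^jZ^k\}_{j,k}$. The states $(X^jZ^k\otimes\I)\ket{\phi_d^+}$ form the generalized Bell basis, so they are mutually orthogonal, as in the dense-coding protocol cited before the lemma. Bob measures in this basis, so each term gives $\tr(\sigma_xQ_x)=1$ and $\mathcal{D}=1=d^2/N$ when $N=d^2$. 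For $N<d^2$ the same construction gives $\mathcal{D}=1$, which exceeds $d^2/N$ only if $N<d^2$; so the statement is presumably meant as the supremum $\min\{1,d^2/N\}$, and I would note this. For $N\ge d^2$, we let the unitaries cycle through the $d^2$ Bell-basis unitaries, each used about $N/d^2$ times. Identifying the Bell state and guessing any one of its labels achieves $\frac1N\cdot d^2=d^2/N$ when $d^2$ divides $N$.

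I expect the main obstacle to be the step $\tr(\sigma_xQ_x)\le\tr(Q_x)$ rather than something tighter. The bound must be uniform over all entangled $\sigma_{AB}$, including non-maximally entangled ones, and this step gives that uniformity for free. It is tight exactly when each $Q_x$ is supported on eigenvectors of $\sigma_x$ with eigenvalue one, and the construction above realizes this, so the two halves match.
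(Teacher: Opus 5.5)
Your upper bound is the paper's argument. The paper quotes the fact that $N$ equiprobable states on a $D$-dimensional space have distinguishability at most $D/N$ and applies it with $D=d^2$. You instead prove that fact inline via $\tr(\sigma_xQ_x)\le\tr(Q_x)$ and $\sum_x\tr(Q_x)=d^2$, exactly as in the proof of Theorem~\ref{th1}.

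The paper's other ingredient is that a $d\otimes d$ entangled probe is already optimal for discriminating $d$-dimensional unitaries. You do not need it, because the statement fixes the shared state to be $d\times d$. Its role is to extend the same $d^2/N$ bound to larger ancillas, i.e.\ to the adversarial quantity maximized over all shared states.

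Your tightness discussion goes beyond the paper, which never shows attainment, and your conclusion that the true maximum is $\min\{1,d^2/N\}$ is right. However, the inequalities in that part are reversed. $d^2/N\le 1$ holds iff $N\ge d^2$, so $d^2/N$ is attained exactly when $N\ge d^2$. For $N<d^2$ the Bell-basis construction gives $\mathcal{D}=1$, which lies strictly \emph{below} $d^2/N$.

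Divisibility is also unnecessary. For any $N\ge d^2$, give the first $d^2$ inputs distinct clock-and-shift unitaries (the rest arbitrary) and let Bob measure in the Bell basis. Setting $Q_x=0$ for the remaining inputs then yields exactly $d^2/N$.
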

\begin{proof}
    This lemma is the consequence of the following two facts:\\
    First, for distinguishability in an entanglement-assisted scenario, the sufficient initial entangled state is of $d\otimes d$ dimension for $d$-dimensional unitaries \cite{bsxv-q9x7}.\\
    Second, the highest value of distinguishability of $N$ quantum states that belong to  $\mathbbm{C}^d$ sampled form uniform distribution is $\mathcal{D}_Q \leqslant \frac{d}{N}$.
\end{proof}

\emph{Distinguishability in adversarial scenario.---}
The communication constraint in this scenario is defined as,
\bea
&&\mathcal{D}_{EAQC}\nonumber\\
&=&\sum_x p_x p(z=x|x) \nonumber\\
&=& \max_{\rho_{AB},\{T_{b}\}_b}\sum_x p_x \tr \left[\left(\sum_a(E_{a|x}^\dagger\otimes\I)\rho_{AB}(E_{a|x}\otimes\I)\right)Q_{b=x}\right]\nonumber\\
&=& \max_{\rho_{AB}}\mathcal{DS}\left[\left\{\sum_a(E_{a|x}^\dagger\otimes\I)\rho_{AB}(E_{a|x}\otimes\I)\right\}_x,\left\{p_x\right\}\right],\nonumber\\
\eea
where $\rho_{AB}$ is the shared state by a third party.
 It can be easily noticed that the above expression is nothing but the distinguishability of channels \cite{AYuKitaev_1997,10.5555/3240076,piani,acin2,manna2025,manna2026,npj,duan,watrous}.

\begin{figure}
    \centering
    \begin{subfigure}[b]{0.4\textwidth}
    \centering
   \begin{tikzpicture}[scale=1,
  decoration={brace,mirror,amplitude=6pt}]

\node[left] at (-1.7,0) {Alice};

\draw (-1.7,0) -- (-0.9,0);

\draw (-0.9,-0.4) rectangle (0.0,0.4);
\node at (-0.4,0) {$U_x$};

\draw[double, -{Latex[length=2mm]}] (-0.4,0.8) -- (-0.4,0.4);
\node at (-0.4,1) {$x$};

\draw (0.0,0) -- (0.5,0);

\draw (0.5,0) -- (1.4,-1.5);
\draw (1.4,-1.5) -- (2.2,-1.5);

\draw[dashed] (-2,-0.7) -- (4,-0.7);

\node[left] at (-1.7,-1.6) {Bob};

\draw (2.2,-1.9) rectangle (3.2,-1.1);
\draw (2.4,-1.6) arc[start angle=180,end angle=0,radius=0.35 cm];
\draw[->, thick] (2.6,-1.6) -- (2.9,-1.35);
\node at (2.7,-2.3) {$\{M_{z|y}\}$};

\draw[double, -{Latex[length=2mm]}] (2.7,-0.5) -- (2.7,-1.1);
\node at (2.7,-0.3) {$y$};

\draw[double, -{Latex[length=2mm]}] (3.2,-1.5) -- (3.8,-1.5);;
\node[right] at (3.8,-1.5) {$z$};

\end{tikzpicture}

    \caption{Quantum Communication (QC)}
    \label{figa}
    \end{subfigure}
    \hfill
    \begin{subfigure}[b]{0.4\textwidth}
    \centering
\begin{tikzpicture}[scale=1,
  decoration={brace,mirror,amplitude=6pt}]

\node[left] at (-2.2,0) {Alice};

\draw (-1.8,0) -- (-1.0,0);

\draw (-1.0,-0.4) rectangle (0,0.4);
\draw (-0.85,-0.1) arc[start angle=180,end angle=0,radius=0.35 cm];
\draw[->, thick] (-0.75,-0.2) -- (-0.4,0.1);
\node at (-0.5,-0.7) {$\{M_{a|x}\}$};

\draw[double,-{Latex[length=2mm]}] (-0.5,0.9) -- (-0.5,0.4);
\node at (-0.5,1.1) {$x$};

\draw[double](-0.0,0) -- (0.6,0);
\node at (0.3,0.25) {$a$};

\draw (0.6,-0.3) rectangle (1.6,0.3);
\node at (1.1,0) {$f(a,x)$};

\draw[double] (1.6,0) -- (2,0);
\node at (1.9,0.25) {$m$};

\draw[double] (2.0,0) -- (2.7,-1.6);
\draw[double] (2.7,-1.6) -- (3.1,-1.6);

\draw[dashed] (-2.3,-1) -- (4.5,-1);

\node[left] at (-2.2,-2) {Bob};

\draw (-1.9,-2) -- (3.1,-2);

\draw (3.1,-2.3) rectangle (4.1,-1.5);
\draw (3.25,-2) arc[start angle=180,end angle=0,radius=0.35 cm];
\draw[->, thick] (3.4,-2.1) -- (3.7,-1.8);
\node at (3.65,-2.7) {$\{N_{z|y}\}$};

\draw[double,-{Latex[length=2mm]}] (3.6,-0.9) -- (3.6,-1.5);
\node at (3.6,-0.8) {$y$};

\draw[double,-{Latex[length=2mm]}] (4.1,-1.9) -- (4.6,-1.9);
\node[right] at (4.62,-1.9) {$z$};

\draw[decorate,decoration={brace,mirror,amplitude=8pt}]
(-1.9,0.4) -- (-1.9,-2.2)
node[midway,xshift=-1 cm] {$\rho_{AB}$};

\end{tikzpicture}

    \caption{Entanglement assisted Classical Communication (EACC)}
    \label{figb}
    \end{subfigure}\\
    \hfill
    
     \begin{subfigure}[b]{0.4\textwidth}
     \centering
    \begin{tikzpicture}[scale=1,
  decoration={brace,mirror,amplitude=6pt}]

\node[left] at (-2.1,0) {Alice};

\draw (-1.8,0) -- (-1.0,0);

\draw (-1.0,-0.45) rectangle (0.0,0.45);
\node at (-0.5,0) {$E_x$};

\draw[double,-{Latex[length=2mm]}] (-0.5,0.9) -- (-0.5,0.45);
\node at (-0.5,1.1) {$x$};

\draw (0,0) -- (0.5,0);

\draw (0.5,0) -- (1.5,-1.6);
\draw (1.5,-1.6) -- (2.4,-1.6);

\draw[dashed] (-2.3,-0.95) -- (4.1,-0.95);

\node[left] at (-2.1,-2.0) {Bob};

\draw (-1.8,-2.0) -- (2.4,-2.0);

\draw (2.4,-2.3) rectangle (3.4,-1.5);

\draw (2.6,-2.05) arc[start angle=180,end angle=0,radius=0.35];
\draw[->,thick] (2.8,-2.1) -- (3.1,-1.9);

\node at (3,-2.7) {$\{N_{z|y}\}$};

\draw[double,-{Latex[length=2mm]}] (3,-1.0) -- (3,-1.55);
\node at (3,-0.8) {$y$};

\draw[double,-{Latex[length=2mm]}] (3.4,-1.95) -- (3.8,-1.95);
\node[right] at (3.8,-1.95) {$z$};

\draw[decorate,decoration={brace,mirror,amplitude=8pt}]
(-1.9,0.4) -- (-1.9,-2.2)
node[midway,xshift=-1.1cm] {$\rho_{AB}$};
\end{tikzpicture}    
\caption{Entanglement assisted Quantum Communication (EAQC)}
    \label{figc}
     \end{subfigure}
  \caption{Circuit diagrams illustrating three different communication protocols. In the circuits, single lines represent quantum systems, while double lines denote classical variables. The dashed line denotes spatial separation between the parties.}     
  \label{fig}
\end{figure}    

\subsection{Quantifying Advantage}
In this work, our objective is to measure the quantum advantage in communication  with respect to the distinguishability of $x$. For a given value of a success metric $\mathcal{S}$ for a certain task, if the minimum value of distinguishability $\mathcal{D}_C$ in CC surpasses the value of distinguishability $\mathcal{D}_{EACC}$ or $\mathcal{D}_{EAQC}$, the quantum advantage is established for EACC and EAQC respectively. One can consider the hierarchy between EACC and EAQC too by considering the distinguishability of the both cases such that the value of success metric is same. But we are not considering this kind of advantage criteria as we are interested in finding the advantage in EACC and EAQC with respect to CC. To quantify this advantage, we draw an analogy from the standard notion of communication complexity and consider the ratio of the distinguishabilities in two kinds of communication, 
$\mathcal{D}_C^\mathcal{S} / \mathcal{D}_{EACC}^\mathcal{S} (\mathcal{D}_C^\mathcal{S} / \mathcal{D}_{EAQC}^\mathcal{S})$, given that the success metric attains at least a certain value, $\mathcal{S}$, in both, CC and EACC (EAQC). If this ratio exceeds 1, it indicates a quantum advantage. 

The quantum advantage can also be quantified in the reverse direction by considering the ratio of the success metrics,
$\mathcal{S}_{EACC}^{\mathcal{D}} / \mathcal{S}_C^{\mathcal{D}}$ (or $\mathcal{S}_{EAQC}^{\mathcal{D}} / \mathcal{S}_C^{\mathcal{D}}$),
under the constraint that the distinguishability $\mathcal{D}$ is upper bounded by a fixed value, for both CC and EACC (EAQC).
In certain cases, deriving the general relation of success merit as a function of distinguishability is analytically intractable.
In such situations, we evaluate the success probability by fixing the distinguishability of the sender's input states numerically,
and this alternative quantification proves to be particularly useful.

 \section{EACC, EAQC and QC are equivalent}
In this section, we prove that all three protocols EACC, EAQC and QC are basically equivalent in terms of producing correlations in restricted distinguishability. That automatically facilitates the fact that advantage with respect to CC in any of the scenario implies the advantage in the other two. We start by showing the equivalence of two entanglement assisted scenario. Note, there is no limit on the dimension of the message.
\begin{lemma}\label{obs1}
    Suppose there exists an EACC protocol involving an entangled state $\sigma_{AB} \in \mathbbm{C}^d\otimes \mathbbm{C}^d$ that produces $\{p(z|x,y)\}$ with distinguishability $\mathcal{D}^\sigma_{EACC}$. Then there exists an EAQC protocol involving $\sigma_{AB}\otimes \ket{\phi_d^+}\!\bra{\phi_d^+}_{A'B'}$ that produces the same $\{p(z|x,y)\}$ with the same distinguishability, where $\ket{\phi_d^+}$ is the maximally entangled state.
\end{lemma}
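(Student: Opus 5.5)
The idea is to replace the classical message $m=f(a,x)$ by a quantum message carried over the extra maximally entangled pair $A'B'$. Alice performs the EACC measurement $\{M_{a|x}\}$ on her share $A$ of $\sigma_{AB}$ and obtains $a$. Conditioned on $m=f(a,x)$, she then applies an encoding to $A'$ and sends $A'$ to Bob, so that $m$ is written into an orthogonal state of $A'B'$. To keep this within the EAQC definition, Alice's whole operation on $AA'$ will be written as a single CPTP map $E_x$, whose Kraus operators are of the form $E_{a|x}=\sqrt{M_{a|x}}\otimes V_{f(a,x)}$. The $A$ system is then discarded or traced out, and only $A'$ is transmitted. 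One could instead encode $m$ in a computational-basis state $\ket{m}\!\bra{m}_{A'}$ without using $\phi_d^+$ at all. The statement, however, fixes the resource $\sigma_{AB}\otimes\phi_d^+$, so I will use dense-coding style unitaries $V_m$ on $A'$. These are the Weyl--Heisenberg operators, of which there are $d^2$, with dimension enlarged by padding if the message alphabet is larger. They map $\phi_d^+$ to mutually orthogonal Bell states $\ket{\phi_m}$.

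\textbf{Step 1 (reproducing correlations).} After the encoding, the state held by Bob on $BA'B'$ for input $x$ is
\[
\sum_a p(a|x,\sigma)\,\rho^B_{a|x}\otimes \ket{\phi_{f(a,x)}}\!\bra{\phi_{f(a,x)}}_{A'B'},
\]
with $\rho^B_{a|x}$ as in \eqref{rhoBax}. Bob first measures $A'B'$ in the Bell basis, which recovers $m=f(a,x)$ perfectly. He then applies the EACC measurement $N_{z|y,m}$ to $B$. Equivalently, his single POVM is $\sum_m N_{z|y,m}\otimes\ket{\phi_m}\!\bra{\phi_m}$. A direct computation then gives the same $p(z|x,y)$ as in the EACC protocol.

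\textbf{Step 2 (equal distinguishability).} The EAQC distinguishability is $\mathcal{DS}$ of the ensemble above with priors $p_x$. The ensemble is isometrically equivalent to $\{\sum_a p(a|x,\sigma)\rho^B_{a|x}\otimes\ket{f(a,x)}\!\bra{f(a,x)}\}_x$ via the unitary map $\ket{\phi_m}\mapsto\ket{m}$, and $\mathcal{DS}$ is invariant under this map. This ensemble is exactly the one whose guessing probability defines $\mathcal{D}^\sigma_{EACC}$ in the displayed expression preceding \eqref{DME_sig}.

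The main obstacle lies in this step. That expression sums over $a$ outside $\mathcal{DS}$, which treats $a$ as available to Bob, whereas in the EAQC state $a$ is traced out and only $f(a,x)$ remains. The two quantities coincide when the message is $a$ itself, as in the paper's convention of sending the outcome. In general, the EACC quantity must be read with $a$ replaced by the message. I would therefore first reduce without loss of generality to the case $f(a,x)=a$, or equivalently transmit $m=(a,f(a,x))$. Because the message dimension is unrestricted, this keeps both distinguishabilities equal by the block-diagonal argument: Bob can read off the orthogonal label first, so the optimal measurement decomposes over labels and $\mathcal{DS}$ of a block-diagonal ensemble equals the sum of the per-block $\mathcal{DS}$ values. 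Together these give equality of the distinguishabilities and of the statistics.
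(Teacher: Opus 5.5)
Your proposal is correct and follows essentially the same route as the paper. Both arguments dense-code the message on the auxiliary maximally entangled pair and let Bob recover it by a Bell measurement before applying $N_{z|y,m}$. Both then note that, within each $a$-block, the dense-coded Bell state is an $x$-independent ancilla, so the distinguishability reduces to $\sum_a \mathcal{DS}[\{\rho^B_{a|x}\}_x,\{p_x p(a|x,\sigma)\}_x]$. The paper indexes its dense-coding unitaries by $a$, which implicitly transmits the outcome (your ``send $(a,f(a,x))$'' fix), and it writes the sum over $a$ without comment; your block-diagonal pinching argument supplies the justification the paper leaves implicit.
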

\begin{proof}
Let us take $R$ to be the number of possible messages $\{f(a,x)\}$ in the EACC protocol.  Consider the EAQC protocol where Alice implements the same measurement $\{M_{a|x}\}_{a,x}$ on her part of $\sigma_{AB}$ that is used in EACC protocol. Additionally$\ket{\phi_d^+}\!\bra{\phi_d^+}_{A'B'}$ is used to send the message $f(a,x)$ using the dense-coding protocol \cite{benett}, such that $\ket{\phi_d^+}_{A'B'} \in \mathbbm{C}^{\lceil{\sqrt{R}}\rceil}\otimes \mathbbm{C}^{\lceil{\sqrt{R}}\rceil}$. There is no other quantum communication apart from the dense-coding implementation. The distinguishability in the EAQC protocol depends on the distinguishability of collapsed state on Bob's side as well as the state Alice sends for quantum communication, i.e., $\sum_a \left\{\sigma^{B}_{a|x} \otimes (U_{a}^\dagger\otimes\I)\ket{\phi_d^+}\bra{\phi_d^+}(U_{a}\otimes\I)\right\}_x,$ with probability $\left\{p(a|x,\sigma) \right\}_x$, where $\{U_a\}_a$ are the unitary operators used in dense coding protocol. Mathematically, we can write,
\bea
&&\mathcal{D}^{\sigma\otimes\ket{\phi_d^+}\bra{\phi_d^+}}_{EAQC}\nonumber\\
&=&\sum_a \mathcal{DS}\left[\left\{\sigma^{B}_{a|x}\otimes (U_{a}^\dagger\otimes\I)\ket{\phi_d^+}\bra{\phi_d^+}(U_{a}\otimes\I) \right\}_x,\right.\nonumber\\
&&\left.\left\{p_x p\left(a|x,\sigma\right) \right\}_x\right]\nonumber\\
&=& \sum_a \mathcal{DS}\left[\left\{\sigma^{B}_{a|x}\right\}_x,
\left\{p_x p\left(a|x,\sigma\right) \right\}_x\right]\nonumber\\
&=& \mathcal{D}^{\sigma}_{EACC}
\eea
Last line comes from the fact that for every $a$, it becomes the distinguishability of $\sigma^{B}_{a|x}$ with an ancillary state $(U_{a}^\dagger\otimes\I)\ket{\phi_d^+}\bra{\phi_d^+}(U_{a}\otimes\I)$. So distinguishability
remains the same as in the EACC protocol.    
\end{proof}

\begin{lemma}\label{obs2}
    Suppose there exists an EAQC protocol involving an entangled state $\sigma_{AB} \in \mathbbm{C}^d\otimes \mathbbm{C}^d$ and channels $\{E_x\}_x$ ($\mathbbm{C}^{d}\rightarrow \mathbbm{C}^{d'}$) that produces $\{p(z|x,y)\}$ with distinguishability $\mathcal{D}^\sigma_{EAQC}$. Then there exists an EACC protocol involving $ \sigma_{AB}\otimes\ket{\phi_{d'}^+}\!\bra{\phi_{d'}^+}_{A'B'}$ where $\ket{\phi_{d'}^+}_{A'B'}\in \mathbbm{C}^{d'}\otimes \mathbbm{C}^{d'}$ that produces the same $\{p(z|x,y)\}$ with the same distinguishability.
\end{lemma}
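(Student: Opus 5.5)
The plan is to simulate the quantum message with quantum teleportation through the extra maximally entangled pair $\ket{\phi_{d'}^+}_{A'B'}$, just as Lemma \ref{obs1} simulated the classical message with dense coding. Concretely, Alice first applies the channel $E_x$ to her share of $\sigma_{AB}$, producing a $d'$-dimensional system $A''$. She then performs the generalized Bell measurement $\{\ket{\Phi_{k}}\!\bra{\Phi_{k}}\}_{k=1}^{d'^2}$ on $A''A'$. As a single EACC measurement, this step is described by the Kraus-composed POVM $M_{k|x}$, obtained by concatenating $E_x$ with the Bell projectors, so it fits the general form $\{M_{a|x}\}$ with $a=(k,\text{Kraus index})$. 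We may coarse-grain over the Kraus index, so effectively $a=k$. Alice sends $m=f(a,x)=k$, which depends only on $a$. Bob applies the teleportation correction $V_k$ on $B'$ and then performs the original EAQC measurement $N_{z|y}$ on $B'B$. The standard teleportation identity gives $(V_k\otimes\I_B)\rho^{B'B}_{k|x}(V_k^\dagger\otimes\I_B)=\rho^{\mathrm{out}}_x$ with $p(k|x)=1/d'^2$, where $\rho^{\mathrm{out}}_x=\sum_a(E_{a|x}\otimes\I)\sigma_{AB}(E_{a|x}^\dagger\otimes\I)$ is the EAQC output state. Hence $p(z|x,y)=\tr(\rho^{\mathrm{out}}_xN_{z|y})$ is reproduced exactly.

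For the distinguishability I will use \eqref{DME_sig}, which applies because the message $f(a,x)=k$ is independent of $x$. The key steps are:
\begin{itemize}
\item Write $\mathcal{D}^{\sigma\otimes\phi^+_{d'}}_{EACC}=\sum_k\mathcal{DS}\left[\{\rho^{B'B}_{k|x}\}_x,\{p_x/d'^2\}_x\right]$.
\item Since $\rho^{B'B}_{k|x}=(V_k^\dagger\otimes\I)\rho^{\mathrm{out}}_x(V_k\otimes\I)$ and the unitary is the same for every $x$, use the unitary invariance of $\mathcal{DS}$ (rotate the optimal POVM) to get $\mathcal{DS}[\{\rho^{B'B}_{k|x}\}_x,\{p_x/d'^2\}]=\frac{1}{d'^2}\mathcal{DS}[\{\rho^{\mathrm{out}}_x\}_x,\{p_x\}]$.
\item Use the homogeneity of $\mathcal{DS}$ in the prior weights to pull out the factor $1/d'^2$.
\item Sum over the $d'^2$ outcomes to obtain exactly $\mathcal{D}^\sigma_{EAQC}$.
\end{itemize}
Bob's knowledge of $k$ does not help because, conditional on each $k$, his ensemble is just a fixed unitary rotation of the EAQC ensemble.

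The main obstacle is verifying that the outcome probabilities $p(k|x)$ equal $1/d'^2$ independently of $x$ and of $\sigma_{AB}$. This matters because an $x$-dependent $p(k|x)$ would let the message leak information and break the reduction to \eqref{DME_sig} with equal weights. I would check it directly from the teleportation identity. Tracing out everything but $A''A'$, the state is $\tr_B\rho^{\mathrm{out}}_x\otimes\I/d'$, and each Bell projector $\Phi_k$ satisfies $\tr_{A''A'}[(\tau\otimes\I/d')\Phi_k]=\tr(\tau)/d'^2$. A secondary point is to state explicitly that composing a CPTP map with a POVM gives an admissible POVM $M_{k|x}$ on $A A'$, so the construction lies within the EACC framework defined earlier.
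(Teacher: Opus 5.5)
Your proposal is correct and takes the same route as the paper: Alice applies $E_x$, teleports the $d'$-dimensional output through $|\phi_{d'}^+\rangle_{A'B'}$ and sends the $d'^2$-valued Bell outcome, and the distinguishability is unchanged because these outcomes are uniformly random and independent of $x$. Your write-up makes explicit what the paper only asserts: that $p(k|x)=1/d'^2$, that unitary invariance and weight-homogeneity of $\mathcal{DS}$ give $\sum_k \frac{1}{d'^2}\mathcal{D}^\sigma_{EAQC}=\mathcal{D}^\sigma_{EAQC}$, and that coarse-graining over Kraus indices is needed to cast the step as an admissible EACC measurement.
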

\begin{proof}
In EACC protocol, Alice first applies $E_x$ on her part of $\sigma_{AB}$ depending on $x$. Then she teleports $\sigma_A$ to Bob using the teleportation scheme \cite{tele}. The teleportation will need $d'^2$ dimensional classical message. Note that, we do not constrain the dimension of messages in the scenario. In teleportation protocol, all the outcomes are totally random. So it does not change the sampling probability of channels $\{E_x\}$. The distinguishability only comes from the states $\left\{\left(\sum_a(E_{a|x}^\dagger\otimes\I)\sigma_{AB}(E_{a|x}\otimes\I)\right)\right\},\{p_x\}$. Thus, the distinguishability in EACC, which is $\mathcal{D}^{\sigma\otimes\ket{\phi_{d'}^+}\!\bra{\phi_{d'}^+}}_{EACC}$
reduces to $\mathcal{D}^{\sigma}_{EAQC}$.
\end{proof}

Next, we will prove the equivalence between two types of quantum communication scenarios. 
\begin{lemma}\label{l5}
     Suppose there exists an EAQC protocol involving an entangled state $\sigma_{AB} \in \mathbbm{C}^d\otimes \mathbbm{C}^d$ and channels $\{E_x\}_x$ ($\mathbbm{C}^{d}\rightarrow \mathbbm{C}^{d'}$) that produces $\{p(z|x,y)\}$ with distinguishability $\mathcal{D}^\sigma_{EAQC}$.  Then there exists an QC protocol involving the transmitted states $\{(E_x\otimes\I)\sigma_{AB}(E_x^\dagger\otimes\I)\}_x$ which produces the same $\{p(z|x,y)\}$ with the same distinguishability.
\end{lemma}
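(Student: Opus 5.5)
The plan is to build the QC protocol directly: Alice prepares the global post-encoding state herself and transmits both subsystems. For each input $x$ she prepares
\[
\rho_x=\sum_a (E_{a|x}\otimes\I)\,\sigma_{AB}\,(E_{a|x}^\dagger\otimes\I),
\]
which acts on $\mathbbm{C}^{d'}\otimes\mathbbm{C}^{d}$. She sends it as a single quantum message of dimension $d'd$. Since QC places no restriction on how the message is produced, Alice can prepare $\sigma_{AB}$ locally, apply the channel $E_x$ to the first factor, and transmit the whole bipartite system. Bob keeps the same measurements $\{N_{z|y}\}$ that he used in the EAQC protocol, which already act on $\mathbbm{C}^{d'd}$.

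The first step is to check the statistics. By \eqref{pD}, $p(z|x,y)=\tr(\rho_x N_{z|y})$. This is literally the Born-rule expression defining the EAQC correlations, so the two protocols produce identical $\{p(z|x,y)\}$.

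The second step is to compare distinguishabilities. In EAQC with a fixed shared state, $\mathcal{D}^\sigma_{EAQC}=\mathcal{DS}[\{\rho_x\}_x,\{p_x\}]$, because this quantity is defined through the joint state that Bob holds after receiving Alice's system. In QC, $\mathcal{D}_Q$ is given by \eqref{DistinguishabilityQuantum}, which is the same functional $\mathcal{DS}$ evaluated on the same ensemble $\{\rho_x\}$ with the same priors. Hence the two distinguishabilities coincide, provided the priors in \eqref{DistinguishabilityQuantum} are taken to be $p_x$ instead of $1/X$; this is automatic in the uniform case.

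The only point that needs care is that distinguishability in EAQC must be the non-adversarial, fixed-$\sigma$ version $\mathcal{D}^\sigma_{EAQC}$, measured on the full joint state, and not the message-only variant that trivializes the task. I would state this explicitly and note that it is exactly the quantity named in the lemma. The cost is that the message dimension grows from $d'$ to $d'd$. This is allowed here, since in this section message dimension is unconstrained.
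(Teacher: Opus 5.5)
Your proposal is correct and follows the same route as the paper. In the QC protocol Alice prepares and sends the post-encoding joint state $\rho_x=(E_x\otimes\I)\sigma_{AB}(E_x^\dagger\otimes\I)$ herself, and Bob reuses his EAQC measurements; the statistics and the distinguishability then coincide because both are evaluated on the identical ensemble, and your remarks on the prior ($p_x$ versus $1/X$ in \eqref{DistinguishabilityQuantum}) and on the message dimension growing to $d'd$ are sound clarifications that the paper leaves implicit.
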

\begin{proof}
    Suppose Alice implements channels $\{E_x\}_x$ for her inputs $x$ in EAQC. After applying these channels, she sends the resulting quantum state to Bob. Consequently, Bob receives the ensemble of states ${(E_x\otimes\I)\sigma_{AB}(E_x^\dagger\otimes\I)}x$. In this scenario, $\mathcal{D}^\sigma_{EAQC}$ denotes the distinguishability of the states ${(E_x\otimes\I)\sigma_{AB}(E_x^\dagger\otimes\I)}_x$.
In the QC protocol, Alice can encode her inputs into the same set of states ${\rho_x=(E_x\otimes\I)\sigma_{AB}(E_x^\dagger\otimes\I)}_x$, thereby reproducing the same correlations with identical distinguishability.
\end{proof}
\begin{lemma}\label{l6}
    Suppose there exists a QC protocol that produces $\{p(z|x,y)\}$ with distinguishability $\mathcal{D}_Q$.  Then there exists an EAQC protocol involving a shared product state $\sigma_{AB}=\sigma_A\otimes\sigma_B \in \mathbb{C}^d\otimes \mathbbm{C}^{d'}$  and channels $\{E_x\}_x$ ($\mathbbm{C}^{d}\rightarrow \mathbbm{C}^{d'}$) that produces the same $\{p(z|x,y)\}$ with the same distinguishability $\mathcal{D}^\sigma_{EAQC}$.
\end{lemma}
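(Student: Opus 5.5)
The plan is an explicit simulation: a QC protocol is an EAQC protocol in which the shared state carries no correlations. We take the given QC protocol, with states $\{\rho_x\}_x$ on $\mathbbm{C}^{d'}$ and Bob's measurements $\{N_{z|y}\}$, and build an EAQC protocol reproducing both its statistics and its distinguishability.

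\textbf{Construction.} Fix an arbitrary state $\sigma_A$ on $\mathbbm{C}^d$ (say $\ket{0}\!\bra{0}$) and an arbitrary state $\sigma_B$ on $\mathbbm{C}^{d'}$. Let Alice use the replacement channels $E_x(\cdot)=\tr(\cdot)\,\rho_x$ from $\mathbbm{C}^d$ to $\mathbbm{C}^{d'}$. Explicitly, writing $\rho_x=\sum_j \lambda_{j|x}\ket{\psi_{j|x}}\!\bra{\psi_{j|x}}$, the Kraus operators are $E_{(i,j)|x}=\sqrt{\lambda_{j|x}}\ket{\psi_{j|x}}\!\bra{i}$, and one checks $\sum_{i,j}E^\dagger_{(i,j)|x}E_{(i,j)|x}=\I_d$. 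Acting on the product state gives
\[
(E_x\otimes\I)(\sigma_A\otimes\sigma_B)=\rho_x\otimes\sigma_B,
\]
which is what Bob receives. Bob measures $N'_{z|y}=N_{z|y}\otimes\I$. Then
\[
\tr\bigl[(\rho_x\otimes\sigma_B)(N_{z|y}\otimes\I)\bigr]=\tr(\rho_x N_{z|y})\,\tr(\sigma_B)=p(z|x,y),
\]
so the correlations coincide with those of the QC protocol.

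\textbf{Distinguishability.} We must show $\mathcal{DS}[\{\rho_x\otimes\sigma_B\}_x,\{p_x\}]=\mathcal{DS}[\{\rho_x\}_x,\{p_x\}]$.

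The direction $\geq$ follows by using the POVM $\{Q_x\otimes\I\}$, where $\{Q_x\}$ is the optimal measurement for $\{\rho_x\}$.

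The direction $\leq$ follows because, for any POVM $\{T_x\}$ on the joint system, the operators $T'_x=\tr_B[(\I\otimes\sigma_B)T_x]$ form a valid POVM on the first system. They are positive, and they sum to $\tr_B[(\I\otimes\sigma_B)\I]=\I$. Moreover $\tr[(\rho_x\otimes\sigma_B)T_x]=\tr(\rho_x T'_x)$, so no joint measurement can exceed $\mathcal{D}_Q$. Equivalently, appending an uncorrelated ancilla independent of $x$ cannot increase distinguishability; this is the same fact already invoked in the proof of Lemma~\ref{obs1}. Hence $\mathcal{D}^\sigma_{EAQC}=\mathcal{D}_Q$.

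\textbf{Main obstacle.} The only delicate point is matching the dimensions in the statement. The channels map $\mathbbm{C}^d\to\mathbbm{C}^{d'}$ while $\sigma_B$ lives on $\mathbbm{C}^{d'}$, so Bob's measurement acts on $\mathbbm{C}^{d'}\otimes\mathbbm{C}^{d'}$. This is harmless, since $\sigma_B$ is simply ignored by Bob's measurement and could even be taken one-dimensional. The replacement-channel Kraus decomposition above makes $E_x$ a legitimate CPTP map for any input dimension $d$, which resolves this concern.
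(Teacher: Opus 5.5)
Your proposal is correct and follows the same route as the paper: choose channels that prepare $\rho_x$ from $\sigma_A$, share an uncorrelated product state, and note that the $x$-independent ancilla $\sigma_B$ changes neither the statistics nor the distinguishability. Your explicit replacement channel $E_x(\cdot)=\tr(\cdot)\,\rho_x$ and your two-sided ancilla argument via $T'_x=\tr_B[(\I\otimes\sigma_B)T_x]$ make rigorous what the paper asserts in a single line.
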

\begin{proof}
    In QC protocol, Alice sends quantum states $\{\rho_x\}_x$. In EAQC protocol, Alice can choose the channels $\{E_x\}_x$ such that $E_x\sigma_A E_x^\dagger=\rho_x$. In this protocol, Alice and Bob share a product state $\sigma_A\otimes\sigma_B$. The distinguishability depends on the distinguishability of $\{E_x\sigma_A E_x^\dagger\}_x$ supplemented with an ancillary state $\sigma_B$, which is distinguishability of $\{\rho_x\}_x$ and that is the distinguishability in QC.
\end{proof}
From last two lemmas, we are certain that all the strategies of EAQC can be executed in QC and vice versa. Earlier we prove that all the strategies of EAQC can be carried out in EACC and vice versa. In a summary, we can conclude that EACC,EAQC and QC are equivalent.

\section{No input on Bob}
In this scenario, Frenkel and Weiner~\cite{fw} showed that quantum communication provides no advantage over classical strategies in dimension-constrained communication tasks. Here, we address the same question under distinguishability constraint and find that the answer remains negative in this case as well.
\begin{thm}
    If Bob receives no input, then the set of conditional probabilities $\{p(z|x)\}$ achievable using QC coincides exactly with the set achievable using CC. Therefore, in such scenarios, quantum communication provides no advantage over classical communication. Consequently, EAQC and EACC offer no advantage over CC.
\end{thm}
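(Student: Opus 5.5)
Given a QC protocol with states $\{\rho_x\}_x$, Bob's measurement $\{N_z\}_z$ and distinguishability $\mathcal{D}_Q$, I will build a CC protocol that reproduces the same $p(z|x)=\tr(\rho_x N_z)$ with distinguishability at most $\mathcal{D}_Q$. The converse inclusion is immediate: any classical encoding $p_e(m|x)$ can be embedded as diagonal states $\rho_x=\sum_m p_e(m|x)\ket{m}\!\bra{m}$. For these states the optimal distinguishing measurement can be taken diagonal, so $\mathcal{D}_Q$ equals $\mathcal{D}_C$ in \eqref{DC}.

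For the nontrivial direction, I use the natural classical simulation: Alice samples $z$ herself and sends $m=z$, with $p_e(m|x)=\tr(\rho_x N_m)$. Bob simply outputs $z=m$, so $p(z|x)$ is reproduced exactly. By \eqref{DC}, the classical distinguishability is
\begin{equation*}
\mathcal{D}_C=\frac{1}{X}\sum_z \max_x \tr(\rho_x N_z).
\end{equation*}
The key step is to show $\mathcal{D}_C\le\mathcal{D}_Q$. For each $z$, let $x^*(z)$ be a maximizer. Define the POVM $Q_x=\sum_{z:\,x^*(z)=x} N_z$, which is valid because $\sum_z N_z=\I$. Then
\begin{equation*}
\frac1X\sum_x \tr(\rho_x Q_x)=\frac1X\sum_z \tr\bigl(\rho_{x^*(z)}N_z\bigr)=\mathcal{D}_C,
\end{equation*}
so $\mathcal{D}_Q\ge\mathcal{D}_C$. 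In words, Bob's measurement followed by classical post-processing is one admissible guessing strategy. The same argument works for nonuniform $p_x$, replacing $\frac1X\max_x$ by $\max_x p_x$.

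The remaining subtlety, and what I expect to be the main point needing care, is the meaning of "coincides exactly" under a distinguishability constraint. The simulation may strictly lower the distinguishability rather than preserve it, so the statement should be read as follows: every pair consisting of a correlation and a distinguishability bound achievable in QC is achievable in CC. If exact equality of $\mathcal{D}$ is wanted, I would raise $\mathcal{D}_C$ up to $\mathcal{D}_Q$ without changing $p(z|x)$. To do this, Alice appends an extra tag to the message: with suitable probability she sends $x$ itself alongside $z$, and Bob ignores the tag. Interpolation between $\mathcal{D}_C$ and $1$ is continuous, so any value in between is attained.

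Since the success metric \eqref{Sgf} with no $y$ depends only on $p(z|x)$, this gives $\mathcal{S}_Q^{\mathcal{D}}=\mathcal{S}_C^{\mathcal{D}}$. Finally, Lemmas \ref{obs1}, \ref{obs2}, \ref{l5} and \ref{l6} establish that EACC, EAQC and QC produce the same correlations at the same distinguishability. The no-advantage conclusion therefore transfers to EAQC and EACC.
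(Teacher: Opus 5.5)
Your proof is correct and follows the paper's argument essentially step for step. The shared steps are: the simulation $p_e(z|x)=\tr(\rho_x N_z)$ with identity decoding; the coarse-grained POVM $Q_x=\sum_z \tilde p_d(x|z)N_z$ giving $\mathcal{D}_C\le\mathcal{D}_Q$ (your deterministic maximizer $x^*(z)$ plays the role of the paper's $\tilde p_d$); the diagonal embedding for the converse; and the transfer to EACC/EAQC via the equivalence lemmas. Your tagging construction, which raises $\mathcal{D}_C$ exactly to $\mathcal{D}_Q$, is a harmless refinement not present in the paper and not strictly needed, since distinguishability enters only as an upper bound.
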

\begin{proof}
We now prove the first statement. The next statement follows from the equivalence of three protocols QC, EACC and EAQC.

Consider a task in which Alice receives an input \(x\in [X]\), while Bob performs a single measurement and produces an outcome \(z\in [Z]\). Let the quantum strategy be specified by the encoding quantum states \(\{\rho_x\}_{x}\) and the POVM \(\{N_z\}_{z}\). The resulting conditional probability distribution is
\begin{equation}
    p_Q(z|x)=\tr(\rho_x N_z).
\end{equation}
Let the distinguishability of the states \(\{\rho_x\}_x\), as defined in Eq.~\eqref{DistinguishabilityQuantum}, be denoted by \(\mathcal{D}_Q\). We now construct a classical strategy that reproduces the same conditional probabilities \(p_Q(z|x)\) for all \(x,z\), while ensuring that the corresponding classical distinguishability does not exceed \(\mathcal{D}_Q\).

In the classical protocol, the communicated message is denoted by \(z\in[Z]\). The encoding strategy is defined as
\begin{equation}
    p_e(z|x)=\tr(\rho_x N_z),
\end{equation}
while the decoding strategy is chosen to be the identity decoder,
\begin{equation}
    p_d(z|z')=\delta_{z,z'}.
\end{equation}
Notice that $p_e(z|x)$ is a valid probability distribution as $p_e(z|x)\geq 0$ and $\sum_z p_e(z|x)=\sum_z \tr(\rho_x N_z)= \tr(\rho_x)=1$. Therefore, the conditional probabilities generated by the classical strategy are
\begin{equation}
p_C(z|x) = \sum_{z'} p_e(z'|x)\delta_{z,z'}
= p_Q(z|x), \ \forall x,z.
\end{equation}
We now compare the distinguishability of the two strategies. Let \(\{\tilde p_d(x|z)\}_{x,z}\) denote the optimal decoding strategy for distinguishing the inputs from the classical messages. The classical distinguishability, defined in Eq.~\eqref{DC}, is given by
 \bea \label{DC-ni}
\mathcal{D}_C &=&
\frac{1}{X}\sum_{z}\max_{x} \bigg\{ \tr(\rho_x N_z)\bigg\}\nonumber\\
&=&
\frac{1}{X}\sum_{z,x}\tr(\rho_x N_z)\tilde{p}_d(x|z) .
 \eea
Define the operators
\be 
Q_x=\sum_z \tilde{p}_d(x|z) \ N_z .
\ee 
One can check $\{Q_x\}_x$ is a POVM. We have $Q_x\geq 0$ since $N_z\geq 0$ and $\tilde{p}_d(x|z)\geq 0$. Moreover, $\sum_x Q_x=\sum_x\sum_z N_z \tilde{p}_d(x|z)=\sum_z N_z \sum_x \tilde{p}_d(x|z)=\sum_z N_z=\I$. Substituting \(Q_x\) in \eqref{DC-ni}, we obtain
 \be\label{dss}
\mathcal{D}_C =
\frac{1}{X}\sum_{x} \tr(\rho_x Q_x) \leq \mathcal{D}_Q.
 \ee
That means all the probabilities of quantum communication can be obtained from classical communication, and \eqref{dss} ensures that classical distinguishability is less than or equal to quantum distinguishability.

%
%

For the reverse scenario, every classical ensemble can
be embedded into quantum theory as a commuting ensemble. Given a classical message $m$, define
\be
\rho_x=\sum_m p_e(m|x)\ket{m}\!\bra{m};
\ee
and for every decoding function $p_d(z|m)$, define a POVM,
\be
N_z=\sum_m p_d(z|m)\ket{m}\!\bra{m}.
\ee
Then
\bea
\tr(\rho_x N_z)=\sum_m p_e(m|x)p_d(z|m).
\eea
For quantum distinguishability, the optimum measurement element $M_x$ can be similarly modeled as
\be
Q_x=\sum_m p_d(x|m)\ket{m}\bra{m}.
\ee
Therefore,
\bea
\mathcal{D}_Q&=&\underset{\{Q_x\}}{\max}\ \  
    \frac1X \sum_x \, \operatorname{Tr}\!\left(\rho_x Q_x\right)\nonumber\\
    &=& \frac1X \sum_m p_e(m|x)p_d(x|m)\nonumber\\
    &=& \frac1X \sum_m \max_x\left\{p_e(m|x)\right\}\nonumber\\
    &\leq& \mathcal{D}_C
\eea
Thus every feasible classical strategy is also a feasible
quantum strategy with identical statistics.
This completes the proof.
\end{proof}

\subsection*{Advantage in dimension restricted scenario}
Now we know that there does not exist any advantage in EACC and EAQC with respect to CC in this scenario. Therefore, we define a new class of protocols, which limits the dimension of the communicated system as well as the distinguishability of the sender's inputs. We allow the sender to transmit one bit (or one qubit) message to the receiver. 

Under this restriction, no advantage can arise when Alice has only two possible inputs or when Bob’s measurement has only two possible outputs. This is because, in such scenarios, it is sufficient to consider a single bit of classical communication, irrespective of the distinguishability constraint imposed on the communication \cite{Tavakoli2022}. We next consider the scenario in which Alice has three possible inputs, \( x \in \{1,2,3\} \), and Bob has three possible outputs, \( z \in \{1,2,3\} \). By deriving all the facet inequalities of the corresponding classical polytope, we find that the non-trivial facet inequalities are
\bea
  &(i)& \frac13\left(p(1|2)+p(2|3)+p(3|1)\right)\leq \mathcal{D};\nonumber\\
  &(ii)& \frac23 -\frac13\left(p(2|1)+p(1|2)+p(3|3)\right)\leq \mathcal{D};\nonumber\\
\eea
The first inequality correspond precisely to the optimal success probabilities for discriminating all three of Alice’s inputs. Since the performance of the task is bounded by the distinguishability of Alice’s inputs, these success measures cannot exhibit any advantage in EACC. For the second inequality, $\mathcal{S}_{EACC}=\frac23-\mathcal{D}_{EACC}$. To obtain an advantage, one would require $\frac23-\mathcal{D}_{EACC}>\mathcal{D}_{EACC}$, which implies $\mathcal{D}_{EACC}<1/3$, which is not possible.

It turns out that the next scenario, where Alice receives three possible inputs, \( x \in \{1,2,3\} \), while Bob performs a fixed measurement with four possible outcomes, \( z \in \{1,2,3,4\} \), constitutes the simplest scenario exhibiting an advantage in EACC. The resulting correlations are described by the conditional probability distribution \( p(z|x) \). We obtain all the facet inequalities in our scenario, and find that only one of them admits a quantum violation. This inequality is given below:
 \bea\label{f1}
\frac12\sum_{i=1}^3 p(i|i)+\frac14\sum_{i=1}^3 p(4|i) \leq \frac{3\mathcal{D}+2}{4} .
 \eea
 The same success merit becomes same with one at \cite{Vieira_2023} if we put distinguishability to be $2/3$. Now we describe a strategy in EACC. Alice and Bob share qubit-qubit Bell state $\ket{\phi^+_2}$. When Alice got inputs $x=1$ and $x=2$, she executes two projective measurements in the basis $\left\{\ket{\psi},\ket{\psi^\perp}\right\}$ and  $\left\{\ket{\phi},\ket{\phi^\perp}\right\}$ respectively and communicates the outcome $m=a\in\{1,2\}$, where $\ket{\psi}=\frac{\sqrt{3}}{2}\ket{0}+\frac12\ket{1}$ and $\ket{\phi}=-\frac{\sqrt{3}}{2}\ket{0}+\frac12\ket{1}$. For $x=3$, she sends $m=2$ deterministically. If Bob got $m=1$, he measures in $\sigma_x$ basis and he measures in $\sigma_z$ basis for $m=2$. For $\sigma_x$ measurement, Bob uses the following post-processing: if he gets outcome corresponding to $\ket{+}\!\bra{+}$, he outputs $1$ and for the other outcome, he produces $2$. Similarly, for $\sigma_z$ measurement, he produces output $3$ for the outcome corresponding to $\ket{0}\!\bra{0}$, otherwise his output is $4$. Using this strategy, $\mathcal{S}_{EACC}=\frac{13+2\sqrt{3}}{16}\approx 1.029$. To achieve this success merit, we need $\mathcal{D}_C\geq \frac{5+2\sqrt{3}}{12}$ (from \eqref{f1}).
 
Distinguishability of Alice's inputs in this scenario,
 \bea
\mathcal{D}^{\phi_2^+}_{EACC}&=& \frac13\left(\mathcal{DS}\left[\{\ket{\psi},\ket{\phi}\},\{1/2,1/2\}\right]\right.\nonumber\\
&+&\left.\mathcal{DS}\left[\left\{\ket{\psi^\perp},\ket{\phi^\perp},\frac{\I}{2}\right\},\{1/2,1/2,1\}\right]\right)\nonumber\\
&=& \frac{6+\sqrt{3}}{12}.
 \eea
Therefore advantage in EACC over CC would be $\mathcal{D}^{\mathcal{S}}_C/\mathcal{D}^{\mathcal{S}}_{EACC}\geq \frac{5+2\sqrt{3}}{6+\sqrt{3}}=1.095$, when $\mathcal{S}= \frac{13+2\sqrt{3}}{16}$.

Now we want to find the range of success merit for which the advantage in EACC exists. As generalization of EACC strategies is difficult, we stick to the same strategy but change the measurement. Needless to say, it is enough to change one measurement; in this case, we take a general form of $\ket{\Tilde{\phi}}=-\cos\frac{\theta}{2}\ket{0}+\sin\frac{\theta}{2}\ket{1}$. So, the success merit $\mathcal{\Tilde{S}}_{EACC}=\frac{15}{32}+\frac14\cos^2\frac{\pi}{12}+\frac18\cos^2\frac{\theta}{2}+\frac14\cos^2(\frac{\pi}{4}-\frac{\theta}{2})$. In a similar manner, advantage in EACC is 
 \bea
\frac{\mathcal{\Tilde{D}}^{\mathcal{\Tilde{S}}}_C}{\mathcal{\Tilde{D}}^{\mathcal{\Tilde{S}}}_{EACC}} \geq \frac{\frac13\cos^2\frac{\pi}{12}
        + \frac16\cos^2\frac{\theta}{2}
        + \frac13\cos^2(\frac{\pi}{4} - \frac{\theta}{2})
        - \frac{1}{24}}{\frac12+\frac16\sqrt{1-\cos^2(\frac{\pi}{6}+\frac{\theta}{2})}},\nonumber\\
 \eea
 when $\mathcal{\Tilde{S}}_{EACC}=\frac{15}{32}+\frac14\cos^2\frac{\pi}{12}+\frac18\cos^2\frac{\theta}{2}+\frac14\cos^2(\frac{\pi}{4}-\frac{\theta}{2}).$
 

Next, we plot the advantage, i.e., the ratio of distinguishabilities with respect to success probability $\mathcal{\Tilde{S}}$ at FIG. \ref{fig2}.
\begin{figure}[h!]
    \centering

    \begin{subfigure}{0.9\linewidth}
        \centering
        \includegraphics[width=\linewidth]{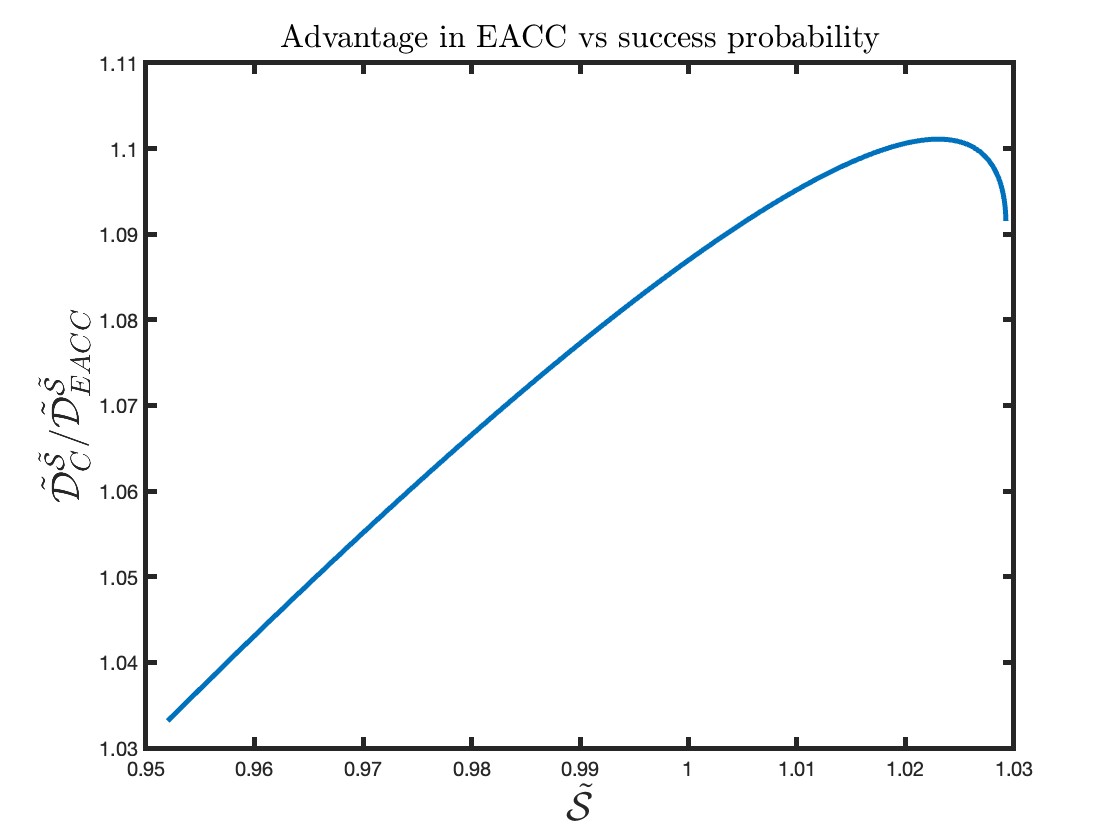}
        \caption{
        The plot illustrates the range of advantage of EACC with respect to the success probability. 
        }
        \label{fig2}
    \end{subfigure}

    \vspace{0.5cm}

    \begin{subfigure}{0.9\linewidth}
        \centering
        \includegraphics[width=\linewidth]{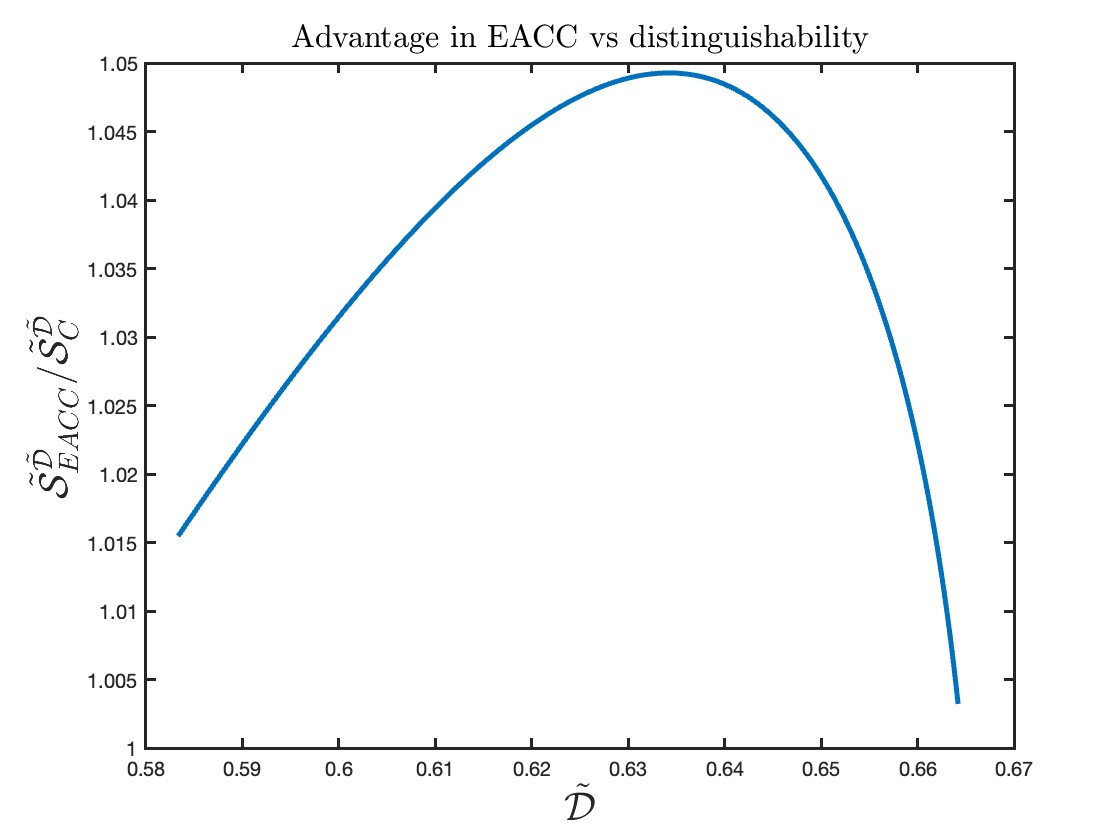}
        \caption{
        The plot illustrates the range of advantage of EACC with respect to distinguishability. 
        }
        \label{fig3}
    \end{subfigure}

    \caption{We compare the range of advantage of EACC with respect to success probability and distinguishability depending on the different measure of advantage. The parties share a maximally entangled state. We consider the same strategy with a different second measurement defined by the basis $\{\ket{\tilde{\phi}},\ket{\tilde{\phi}^{\perp}}\}$, parametrized by $\theta$. }
    \label{fig:eacc_advantage}
\end{figure}
We can define the advantage in other way around. In this case, we fix the distinguishability $\mathcal{\Tilde{D}}$ in both the communications, and find the ratio $\mathcal{\Tilde{S}}_{EACC}^\mathcal{\Tilde{D}}/\mathcal{\Tilde{S}}^\mathcal{\Tilde{D}}_C$. The next plot at FIG \ref{fig3} gives the range of distinguishability where the advantage appears in EACC.

Ref. \cite{svegborn} shows that if the parties share the state $\ket{\zeta}=\sqrt{\frac23}\ket{00}+\sqrt{\frac13}\ket{11}$, the success merit of \eqref{f1} reaches the value of $\frac{9+2\sqrt{3}}{12}$, which is greater than the success merit we got with maximally entangled state. In this case,
\bea
&&\mathcal{D}^{\zeta}_{EACC}\nonumber\\
&=& \frac13\left(\mathcal{DS}\left[\{\ket{\overline{\psi}},\ket{\overline{\phi}}\},\{\frac12+\frac{1}{6\sqrt{3}},\frac12+\frac{1}{6\sqrt{3}}\}\right]\right.\nonumber\\
&+&\left.\mathcal{DS}\left[\left\{\ket{\overline{\psi}^\perp},\ket{\overline{\phi}^\perp},\frac{\I}{2}\right\},\{\frac12-\frac{1}{6\sqrt{3}},\frac12-\frac{1}{6\sqrt{3}},1\}\right]\right)\nonumber\\
&=& 0.6604,
 \eea
where $\ket{\overline{\psi}}=\frac{\sqrt{2}\alpha\ket{0}+\beta\ket{1}}{\sqrt{2\alpha^2+\beta^2}}$ and $\ket{\overline{\phi}}=\frac{\sqrt{2}\alpha\ket{0}-\beta\ket{1}}{\sqrt{2\alpha^2+\beta^2}}$, where $\alpha=\sqrt{\frac{1+\frac{1}{\sqrt{3}}}{2}}, \beta=\sqrt{\frac{1-\frac{1}{\sqrt{3}}}{2}}$.

Advantage in this case $\mathcal{D}^{\mathcal{S}}_C/\mathcal{D}^{\mathcal{S}}_{EACC}\geq \frac{0.7182}{0.6604}=1.0875$, when $\mathcal{S}= \frac{9+2\sqrt{3}}{12}$. In this case, the advantage is smaller than that obtained with a maximally entangled state.
 

  In Ref.~\cite{fw}, it has been shown that quantum communication (QC) offers no advantage over classical communication (CC) when Bob receives no input and the dimension of the communicated system is the same in both cases. In this scenario, when the message dimension is $2$, the maximum achievable distinguishability in quantum communication is $2/3$. Substituting this value into Eq.~\eqref{f1}, we obtain the maximal value of the correlation as $1$, which is identical for both CC and QC. We already found that success probability of EACC exceeds the value of $1$ with distinguishability $\frac{6+\sqrt{3}}{12}$, which is smaller that $2/3$. Therefore, the performance of QC cannot exceed that of $\mathcal{S}_{EACC}$, leading to the conclusion that EACC is strictly superior to QC in this setting.

By Lemma~\ref{obs1}, a similar advantage can be established for EACC using the distinguishability $\mathcal{D}^{\phi_2^+}_{EACC}$. Furthermore, it can be observed that achieving the same advantage in EAQC requires not more than $2$-dimensional quantum message.

\section{Advantageous EACC with one bit message}
In this section, we find entanglement assisted protocols which are advantageous with respect to classical communication. We allow infinite amount of classical messages in CC but we find advantage in EACC by transmitting one bit classical message.
First, we exhibit a result which connects QC with EACC when transmitted message is a bit.
\begin{thm}\label{th2}
    Any QC protocol where Alice sends qubit pure states with Bob executes projective measurements can be implemented in EACC with same distinguishability, i.e., $\mathcal{D}_Q=\mathcal{D}^{\phi_2^+}_{EACC}$.
\end{thm}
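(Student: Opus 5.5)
The plan is to simulate the qubit protocol directly by ``remote state preparation'' through the shared Bell state $\ket{\phi_2^+}$, sending Alice's measurement outcome as the one-bit message. Let the QC protocol use pure states $\{\ket{\psi_x}\}_x\subset\mathbbm{C}^2$ and, for each $y$, a projective measurement in a basis $\{\ket{b_y},\ket{b_y^\perp}\}$. Any post-processing of these two outcomes into $z$ is absorbed into Bob's decoding, so it suffices to reproduce the two-outcome statistics $|\langle b_y|\psi_x\rangle|^2$.

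\textbf{Encoding and decoding.} In the EACC protocol, on input $x$ Alice measures her half of $\ket{\phi_2^+}$ in the basis $\{\ket{\psi_x}^*,\ket{\psi_x^\perp}^*\}$. By the transpose trick $(A\otimes\I)\ket{\phi_2^+}=(\I\otimes A^T)\ket{\phi_2^+}$, each outcome $a\in\{1,2\}$ occurs with probability $1/2$. Bob's system collapses to $\ket{\psi_x}$ for $a=1$ and to $\ket{\psi_x^\perp}$ for $a=2$. Alice sends $m=a$.

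Upon $m=1$, Bob performs the QC measurement $\{\ket{b_y},\ket{b_y^\perp}\}$ unchanged. Upon $m=2$, he performs the same measurement but swaps the two outcome labels. The key qubit identity is
\[
|\langle b_y^\perp|\psi_x^\perp\rangle|^2=1-|\langle b_y|\psi_x^\perp\rangle|^2=|\langle b_y|\psi_x\rangle|^2 ,
\]
which holds because in $\mathbbm{C}^2$ the perpendicular of each vector is unique up to phase. Hence both branches reproduce $p(z|x,y)$ exactly, and the averaged statistics coincide with the QC ones.

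\textbf{Distinguishability.} Since $f(a,x)=a$ is independent of $x$, Eq.~\eqref{DME_sig} applies with $p(a|x,\phi_2^+)=1/2$. With $p_x=1/X$, this gives
\[
\mathcal{D}^{\phi_2^+}_{EACC}=\tfrac12\,\mathcal{DS}\big[\{\ket{\psi_x}\},\{1/X\}\big]+\tfrac12\,\mathcal{DS}\big[\{\ket{\psi_x^\perp}\},\{1/X\}\big].
\]
The first term is exactly $\mathcal{D}_Q$. For the second, I would use the antiunitary $\Theta=i\sigma_y K$ (with $K$ complex conjugation), which satisfies $\Theta\ket{\psi}\propto\ket{\psi^\perp}$ for every qubit state.

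If $\{Q_x\}$ is the optimal POVM for $\{\ket{\psi_x}\}$, then $\{\Theta Q_x\Theta^{-1}\}$ is again a POVM. Moreover,
\[
\tr\big(\Theta\rho\Theta^{-1}\,\Theta Q\Theta^{-1}\big)=\overline{\tr(\rho Q)}=\tr(\rho Q),
\]
since $\tr(\rho Q)$ is real for Hermitian $\rho,Q$. This shows the second term is at least $\mathcal{D}_Q$. Applying the same argument with $\Theta^{-1}$ gives the reverse inequality. Therefore both terms equal $\mathcal{D}_Q$, and $\mathcal{D}^{\phi_2^+}_{EACC}=\mathcal{D}_Q$.

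The main obstacle is this antiunitary-invariance step, namely justifying that the distinguishability of the ensemble is preserved when $\rho\mapsto\Theta\rho\Theta^{-1}$. That step, and the identity $|\langle b^\perp|\psi^\perp\rangle|=|\langle b|\psi\rangle|$, is precisely where the restriction to qubits, pure states and projective measurements enters. A secondary bookkeeping point is the complex conjugation introduced by $\ket{\phi_2^+}$, which is why Alice must measure in the conjugated basis.
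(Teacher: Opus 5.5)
Your proposal is correct and follows the paper's proof essentially step by step. Alice measures her half of $\ket{\phi_2^+}$ in the conjugated basis $\{\ket{\psi_x}^*,\ket{\psi_x^\perp}^*\}$ and sends the outcome; Bob applies the QC measurement and flips the label on the second outcome; and Eq.~\eqref{DME_sig} with $p(a|x)=1/2$ splits $\mathcal{D}^{\phi_2^+}_{EACC}$ into the average of the distinguishabilities of $\{\ket{\psi_x}\}$ and $\{\ket{\psi_x^\perp}\}$. The only difference is that you justify $\mathcal{DS}[\{\ket{\psi_x^\perp}\},\{1/X\}]=\mathcal{D}_Q$ via the antiunitary $\Theta=i\sigma_y K$, whereas the paper simply asserts this as a known fact about qubit ensembles.
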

\begin{proof}
   In a binary–output quantum communication task, Alice receives an input 
$x \in [X]$. For each input $x$, she sends a pure quantum state 
$\ket{\zeta_x}$ to Bob. Bob receives an input 
$y \in [Y]$ and performs a projective measurement in the basis 
$\{\ket{\kappa_y},\ket{\kappa^\perp_y}\}$ depending on $y$.  
The resulting conditional statistics are
\bea
    p(z|x,y) &=& \operatorname{Tr}\!\left(\ket{\zeta_x}\bra{\zeta_x} \ket{\kappa_y}\bra{\kappa_y}\right)\nonumber\\
    &=&|\langle \zeta_x|\kappa_y\rangle|^2.
\eea

Suppose now Alice and Bob share a maximally entangled two–qubit state 
$\lvert\phi_2^+\rangle_{AB}$.  
For each input $x$, Alice performs the binary projective measurement 
$\{\ket{\zeta_x}^*\!\bra{\zeta_x}^*,\;\ket{\zeta_x^\perp}^*\!\bra{\zeta_x^\perp}^*\}$,
and sends the classical outcome to Bob.

\begin{enumerate}
    \item If Alice obtains the outcome corresponding to $\ket{\zeta_x}^*$, then Bob’s post–measurement reduced state is $\ket{\zeta_x}$.  
    Upon receiving this outcome, he performs the measurement in the basis 
$\{|\kappa_y\rangle,|\kappa_y^\perp\rangle\}$.

    \item If Alice obtains the outcome corresponding to $\ket{\zeta_x^\perp}^*$, then Bob’s reduced state is  $\ket{\zeta_x^\perp}$.  
   Bob performs the same measurement but reverses the outcome, i.e., if Bob gets outcome $'0'$, he gives output $'1'$ and for outcome $'1'$, he generates output $'0'$.
\end{enumerate}
In this strategy, Bob's statistics will be,
\bea
&&p_{\mathrm{EACC}}(z|x,y)\nonumber\\
&=&\left[p(a=0|x)p(z|x,y,a=0)\right.\nonumber\\
&&\left.+p(a=1|x)p(\overline{z}|x,y,a=1)\right],
\eea
where $'a'$ denotes the outcome of measurement corresponding to input $x$.
As measurements are projective, we can write $p(a=0|x)=p(a=1|x)=1/2$. The above equation reads as,
\bea\label{pzxy2}
  &&  p_{\mathrm{EACC}}(z|x,y)\nonumber\\
    &=& \frac12\left[|\langle \zeta_x|\kappa_y\rangle|^2+|\langle \zeta^\perp_x|\kappa^\perp_y\rangle|^2\right]\nonumber\\
    &=& |\langle \zeta_x|\kappa_y\rangle|^2.
\eea
  So \eqref{pzxy2} gives the same statistics as QC if Bob reverses the outcome. 
Thus the success metric in the entanglement assisted classical communication protocol 
is the same as in the original quantum communication task.

To obtain the same advantage as in the quantum communication scenario, we must 
show that the distinguishability of the inputs in  EACC is equal 
to the distinguishability achieved through direct quantum communication. Alice only sends her outcome as the classical message, so distinguishability depends only on the measurements of Alice as the outcomes of Alice are random.
In this case, the distinguishability takes the form of Eq.~\eqref{DME_th1} with $d = 2$.  

By a straightforward calculation, we find
\bea
&& \mathcal{D}^{\phi_2^+}_{EACC}
   \nonumber\\
&=& \frac{1}{N} \left(
        \mathcal{DS}\!\left[
             \left\{\ket{\zeta_x}\right\}_x,
             \left\{\tfrac{1}{2}\right\}_x
        \right]
        +
        \mathcal{DS}\!\left[
             \left\{\ket{\zeta_x^\perp}\right\}_x,
             \left\{\tfrac{1}{2}\right\}_x
        \right]
    \right) \nonumber\\
&=& \frac{1}{2} \left(
        \mathcal{DS}\!\left[
             \left\{\ket{\zeta_x}\right\}_x,
             \left\{\tfrac{1}{N}\right\}_x
        \right]
        +
        \mathcal{DS}\!\left[
             \left\{\ket{\zeta^\perp_x}\right\}_x,
             \left\{\tfrac{1}{N}\right\}_x
        \right]
    \right) \nonumber\\
&=& \frac{1}{2}\left(\mathcal{D}_Q + \mathcal{D}_Q\right) \nonumber\\
&=& \mathcal{D}_Q ,
\eea
which coincides with the distinguishability of the inputs in QC. We exploit the fact that the distinguishability of a collection of qubit states sampled 
from a given probability distribution is equal to the distinguishability of the 
corresponding orthogonal states sampled from the same distribution.

Thus, the strategy described above achieves the same quantum advantage in the EACC as in QC. 
\end{proof}
Needless to say, this QC protocol can be implemented in EAQC with same distinguishability.

In the subsequent subsections, we will provide some specific examples of communication tasks and the advantage in EACC and EAQC over CC in those particular tasks. 
\subsection{Random Access Codes}
In this task, Alice is given a string of $n$ dits, denoted by $x = x_1 x_2 \cdots x_n$, chosen uniformly at random from the set of all 
$d$-ary strings of length $n$. Each component $x_y$ belongs to the alphabet $[d]$ for every $y \in [n]$. 
After communication, Bob's goal is to guess the $y$-th dit of the string, where $y$ is chosen uniformly at random from the set $[n]$. As the task depends on the parameters $n$ and $d$, we refer to it as the 
$(n,d)$-RAC.
 The success metric is determined by the average success probability, defined through 
\be \label{Snd}
\mathcal{S}(n,d) = \frac{1}{nd^n} \sum_{x,y} p(z=x_y|x,y) .
\ee
The ref. \cite{PRR_24} provides that, for  $(n,d)$-RAC, the following holds in CC,
     \be\label{scnd} 
n \mathcal{S}_{C}(n,d) +1-n \leqslant \mathcal{D}_C.
     \ee 
Now we list the advantage in EACC protocols in $(2,2)$-RAC and $(3,2)$-RAC. The ref. \cite{Ambainis_2024} provides advantageous qubit QC protocols with Bob implementing projective measurements for these contexts. The following results are direct implication of Theorem \ref{th2}. 
\begin{center}
\begin{tabular}{|c|c|c|}
 \hline
 Scenario & $\mathcal{S}$ & $\mathcal{D}_C^\mathcal{S}/\mathcal{D}_{EACC}^\mathcal{S}$ \\ 
 \hline\hline
 (2,2) RAC & $\frac12\!\left(1+\frac{1}{\sqrt{2}}\right)$ & $\geqslant \sqrt{2}$ \\ 
 \hline
 (3,2) RAC & $\frac12\!\left(1+\frac{1}{\sqrt{3}}\right)$ & $\geqslant 2(\sqrt{3}-1)$ \\ 
 \hline
\end{tabular}
\end{center}

Semidefinite programming (SDP) analysis shows that the same advantage can also be achieved in adversarial scenario. In these scenarios, Alice performs projective measurements whose distinguishability is maximized when using a maximally entangled probe. Consequently, the distinguishability cannot increase further, which guarantees at least the same advantage as obtained in the previous case.

 But it would be an interesting venture if we find an advantageous EAQC protocol in the same dimension of EACC. For $(3,2)$-RAC, the ref. \cite{piveteau2022entanglement} gives an EAQC protocol which gives the success probability $1/2+1/\sqrt{6}$. To get this success probability classically, $\mathcal{D}_C\geqslant \frac{3}{\sqrt{6}}-\frac12$. In this protocol, Alice uses eight qubit unitaries corresponding to her eight inputs. So the highest value of distinguishability of these unitaries is $2^2/8=1/2$ (Lemma \ref{lem1}). So the advantage $\frac{\mathcal{D}_C^\mathcal{S}}{\mathcal{D}_{EAQC}^\mathcal{S}}
\geq(\frac{3}{\sqrt{6}}-\frac12)/(\frac12)=(\sqrt{6}-1)$, when $\mathcal{S}=1/2+1/\sqrt{6}$.

 \subsection{Equality Problem defined by Graphs}
The communication task is defined by an arbitrary graph $G$ having $N$ vertices.  Alice and Bob receive input from the vertex set of $G$, i.e., $x,y\in [N]$, sampled from the uniform distribution. Let $G_x$ denote the set of vertices in $G$ that are adjacent to $x$, with $N_x$ representing the number of vertices adjacent to $x$. In the task, there is a promise that $x=y$ or $x \in G_y$. Bob's aim is to differentiate between these two cases, giving the correct output $z=1$ if $x=y$ and $z=2$ if $x\in G_y$. Hence, the success metric,  
\be \label{SG}
\mathcal{S}(G) = \frac{1}{\sum_x N_x+N} \sum_{y=1}^N \left( p(1|x=y,y) + \sum_{x\in G_y} p(2|x,y) \right) .
\ee 
The quantum advantage for this task in terms of the dimension of the communicated systems has been studied in \cite{saha2019,saha2023}.

For any graph $G$, the following holds in classical communication,
\be\label{scg}
\frac{1}{N\alpha(G) } \left( \left(\sum_x N_x + N \right)\left(\mathcal{S}(G) -1 \right) + N \right) \leqslant \mathcal{D}_C ,
\ee 
where $\alpha(G)$ is the independence number of the graph $G$\cite{PRR_24}.

From the ref. \cite{PRR_24} and Theorem \ref{th2} , we can infer the following result.

    Let us consider $N$-cyclic graph (denoted by $\Delta_N$) such that $N\geqslant 5$ and odd. There exist EACC strategies such that 
    \be \label{ratio}
\frac{\mathcal{D}_C^\mathcal{S}(\Delta_N)}{\mathcal{D}_{EACC}^\mathcal{S}(\Delta_N)} = \frac{N}{N-1}\left({1-2\sin^2\left(\frac{\pi}{2N}\right)}\right) > 1,
    \ee 
    for $\mathcal{S} = 1-\frac{2}{3}\sin^2\big(\frac{\pi}{2N}\big)$.

\subsection{Pair Distinguishability Task}
We consider a generalized version of the task introduced in \cite{Chaturvedi2020quantum}. Alice receives input $x \in [N]$ with $p_x=1/N$ and Bob receives a pair of inputs $y \equiv (x,x')$ randomly such that $x,x' \in [N]$ and $x< x'$. The task is to guess $x$. In other words, given Bob's input $(x,x')$, the task is to distinguish between these two inputs. Subsequently, the average success metric, 
\be
\mathcal{S}(N) = \frac{1}{N(N-1)} \sum_{\substack{x,x' \\  x<x' }} \big[ p(x|x,(x,x')) + p(x'|x',(x,x')) \big] .
\ee  
The ref. \cite{PRR_24} provides the following bound in CC:
\be \label{scpd}
(N-1)(\mathcal{S}_C(N) -1) +1 \leqslant  \mathcal{D}_C .
\ee 
The same reference also exhibits the quantum advantageous qubit protocols in quantum communication for $N=3,4,5,6$. By Theorem \ref{th2}, we can achieve same advantage in EACC. Similar to RAC, numerically we checked that same advantage can be gained in adversarial setting.

\subsection{Simplest communication task \cite{Chaturvedi2020quantum}}
The ref. \cite{Chaturvedi2020quantum} gives the bound of classical success probability as the function of distinguishability of this task as following:
\bea
&&p(1|1,1)+p(1|1,2)+p(1|2,1)+p(2|2,2)+p(2|3,1)\nonumber\\
&&\leqslant 2+3\mathcal{D}_C,
\eea
which gives the bound on classical distinguishability as,
\bea
\mathcal{D}_C\geqslant \frac13(\mathcal{S}_C-2).
\eea
The same reference also demonstrates the advantage in qubit quantum communication protocol. According to Theorem \ref{th2}, we conclude that $\mathcal{D}_C^\mathcal{S}/\mathcal{D}_{EACC}^\mathcal{S}\geqslant \frac{1/3(3+\sqrt{2}-2)}{2/3}=\frac{1+\sqrt{2}}{2}$, when $\mathcal{S}=3+\sqrt{2}$.

\section{Class of tasks where non-maximally entangled states yield a better advantage than a maximally entangled state}

In this section, we take a tilted version of $(2,2)$ RAC. In this setting, our aim is to demonstrate that sharing a qubit-qubit non-maximally entangled state can be advantageous compared to a same dimensional maximally entangled state if Alice uses any four two-outcome measurements corresponding to her inputs and relay the outcomes to Bob.

The success merit of this task is defined as following:
\bea\label{bell}
\mathcal{S}(\theta)&=&\sum_{a,b,x,y}(-1)^{a+b+xy}\beta_{ax}p(b|ax,y)\nonumber\\
&& +\alpha\{\beta_{00} p(0|0,0,0)+\beta_{10} p(0|1,0,0)\nonumber\\
&&-\beta_{00} p(1|0,0,0)-\beta_{10} p(1|1,0,0)\},
\eea
where $\beta_{00}=\cos\mu\cos^2\theta-\frac{\cos\mu}{2}+\frac12, \beta_{10}=1-\beta_{00},\beta_{01}=-\sin\mu\cos^2\theta+\frac{\sin\mu}{2}+\frac12,\beta_{11}=1-\beta_{01}$, $\alpha=\frac{2}{\sqrt{1+2\tan^2(2\theta)}}$. The relation between $\mu$ and $\theta$ such that $\cos\mu=\frac{1}{\sqrt{1+\sin^2(2\theta)}}$. 

 Now the right hand side of above equation is in the form of success merit of two party communication task defined at \eqref{SGen}. This communication task consists of Alice's input $x'=(a,x) \in\{00,01,10,11\}$, Bob's input $y\in\{0,1\}$ and Bob's output $b\in\{0,1\}$. Fixing the value of $\theta$ determines the game. First, we describe the method of evaluating the classical bound.

Using the same method of \eqref{scd1}, we can always write any success metric in the encoding strategy. Similarly, the success merit at \eqref{bell} can be written as,
\bea
\mathcal{S}_C(\theta) &\leqslant &  \sum_{m} \max \{(1+\alpha)\beta_{00}p_e(m|00)+\beta_{01}p_e(m|01) \nonumber \\
 && -(1-\alpha)\beta_{10}p_e(m|10)-\beta_{11}p_e(m|11), \nonumber \\
 &&-(1+\alpha)\beta_{00}p_e(m|00)-\beta_{01}p_e(m|01)+ \nonumber \\
 &&(1-\alpha)\beta_{10}p_e(m|10)+\beta_{11}p_e(m|11) \}  \nonumber \\
 && +\sum_{m} max \{ \beta_{00}p_e(m|00)-\beta_{01}p_e(m|01)\nonumber\\
 &&-\beta_{10}p_e(m|10) 
 +\beta_{11}p_e(m|11),\nonumber \\ 
 &&-\beta_{00}p_e(m|00)+\beta_{01}p_e(m|01) \nonumber \\
 &&+\beta_{10}p_e(m|10)-\beta_{11}p_e(m|11)\}.
\eea

The constraint of this communication task is distinguishability of Alice's inputs and we fix the highest value of distinguishability to $1/2$. The equation \eqref{DC} implies that,
\bea
\sum_m\max\{p_e(m|00), p_e(m|01), p_e(m|10), p_e(m|11)\}\leqslant 2;\nonumber\\
\eea
The encoding variables $\{p(m|ax)\}_m$ forms a convex set for each $a,x$, which means  $\{p(m|ax)\}_m$ represents the vertices of a polytope. We can get the maximum value of $\mathcal{S}_C(\theta)$ by numerically computing all the possible values the vertices of the polytope can take under the above constraint. In principal, the number of messages can be anything; but it can be proved that the sufficient number of messages is same as the number of inputs \cite{Tavakoli2020}. For example, we evaluate that 
$
\mathcal{S}_C(\theta=\pi/3)= 2.7559.
$

 Now we illustrate a qubit EACC protocol that is advantageous with respect to CC.
 
 Alice and Bob share a non-maximally entangled state, $\ket{\Psi}=\cos\theta\ket{00}+\sin\theta\ket{11}$, to implement the protocol. For her four possible inputs, Alice performs four measurements on her subsystem of the shared state and communicates the outcome to Bob. Alice’s measurements are as follows:
$
M_{00}=\left\{\ket{\omega}\bra{\omega},\ket{\omega^\perp}\bra{\omega^\perp}\right\},
M_{10}=\left\{\ket{\omega^\perp}\bra{\omega^\perp},\ket{\omega}\bra{\omega}\right\},$
$
M_{01}=\left\{\ket{\tau}\bra{\tau},\ket{\tau^\perp}\bra{\tau^\perp}\right\},$ $
M_{11}=\left\{\ket{\tau^\perp}\bra{\tau^\perp},\ket{\tau}\bra{\tau}\right\},
$
where $\ket{\omega}=\cos\frac{\mu}{2}\ket{0}+\sin\frac{\mu}{2}\ket{1}$ and $\ket{\tau}=\cos(\frac{\mu}{2}+\frac{\pi}{4})\ket{0}+\sin(\frac{\mu}{2}+\frac{\pi}{4})\ket{1}$.
 
Upon receiving the first outcome, Bob performs measurement in the $\sigma_z$ basis for $y=0$, and in the $\sigma_x$ basis for $y=1$,  If Bob receives the second outcome, he applies the same measurements but reverses the outcome (as of Theorem \ref{th2}).

For finding advantage, we choose the alternate definition as we do not have the bound of classical success merit as a function of distinguishability.
With this strategy, the success probability $\mathcal{S}_{EACC}(\theta)=\sqrt{8+2\alpha^2}$. 
In the EACC strategy, the distinguishability of Alice’s measurements is $\mathcal{D}_{EACC}^{\Psi}\leqslant 1/2$ (Using semi-definite programming). Consequently, at $\theta=\pi/3$, the advantage is given by $\mathcal{S}_{EACC}(\theta=\pi/3) / \mathcal{S}_C(\theta=\pi/3) =3.0237/2.7559=1.0972$ at $\mathcal{D}_C\le1/2,\mathcal{D}_{EACC}^{\Psi}\le 1/2$.  
\subsection*{Non-optimality of maximally entangled state}
In this subsection, we prove that maximally entangled state is non-optimal for a class of tilted RAC. We choose that the distinguishability of Alice's measurements must not exceed $1/2$.
Let us take the parties share maximally entangled state as the pre-shared resource.
\subsubsection*{Optimality of Bob's strategy}
In general, Alice can employ four two-outcome measurements (without assuming that two of them are identical up to a reversal of outcome ordering, as considered above). Suppose that, depending on Alice’s outcome, the reduced states on Bob’s side are $\rho_{ax}$ and $\bar{\rho}_{ax}$. For any measurements, the following holds:
\be\label{rel}
q_{ax}\rho_{ax}+(1-q_{ax})\bar{\rho}_{ax}=\frac{\I}{2},
\ee
where $q_{ax}$ is the probability of getting $\rho_{ax}$ at Bob's side.
Based on Alice’s message and his own input, Bob can choose an appropriate set of measurements. This implies that a sufficient number of Bob’s measurements is four, which we denote by $\{N^{my}_b\}_{m=0}^1$ for each $y,b$. Thus, the correlation can be written as  
\bea
p(b|a,x,y)&=&q_{ax}\tr(\rho_{ax}N^{0y}_b)+(1-q_{ax})\tr(\Bar{\rho}_{ax}N^{1y}_b)\nonumber\\
&=& q_{ax}\left(\tr(\rho_{ax}N^{0y}_b)-\tr(\rho_{ax}N^{1y}_b)\right)+\frac12\tr(N^{1y}_b).\nonumber\\
\eea
In the second line, we have used the relation \eqref{rel}. Similarly, one can derive the expression for $\mathcal{S}(\theta)$ of \eqref{bell}, which takes the form  
\begin{widetext}
\bea\label{S_m}
&&\mathcal{S}_{\phi^+}(\theta)\nonumber\\
&=& \tr\left[N^{00}_0\{(-1)^{a}q_{ax}\beta_{ax}\rho_{ax}+\alpha(q_{00}\beta_{00}\rho_{00}+q_{10}\beta_{10}\rho_{10})\}\right]
- \tr\left[N^{10}_0\{(-1)^{a}q_{ax}\beta_{ax}\rho_{ax}+\alpha(q_{00}\beta_{00}\rho_{00}+q_{10}\beta_{10}\rho_{10})\}\right]\nonumber\\
&& + \tr\left[N^{00}_1\{(-1)^{a+1}q_{ax}\beta_{ax}\rho_{ax}-\alpha(q_{00}\beta_{00}\rho_{00}+q_{10}\beta_{10}\rho_{10})\}\right]- \tr\left[N^{10}_1\{(-1)^{a+1}q_{ax}\beta_{ax}\rho_{ax}-\alpha(q_{00}\beta_{00}\rho_{00}+q_{10}\beta_{10}\rho_{10})\}\right]\nonumber\\
&&+ \tr\left[N^{01}_0\{(-1)^{a+x}q_{ax}\beta_{ax}\rho_{ax}\}\right]- \tr\left[N^{11}_0\{(-1)^{a+x}q_{ax}\beta_{ax}\rho_{ax}\}\right]+\tr\left[N^{01}_1\{(-1)^{a+1+x}q_{ax}\beta_{ax}\rho_{ax}\}\right]\nonumber\\
&& - \tr\left[N^{11}_1\{(-1)^{a+1+x}q_{ax}\beta_{ax}\rho_{ax}\}\right]+ \{(-1)^{a+x}\beta_{ax}\}\tr(N^{11}_0-N^{11}_1)+ \{(-1)^{a}\beta_{ax}+\alpha(\beta_{00}+\beta_{10})\}\tr(N^{10}_0-N^{10}_1).
\eea
    
\end{widetext}
Now consider the first two terms. To maximize $\mathcal{S}_M(\theta)$, $N^{00}_0$ must be the projector onto the eigenvector corresponding to the largest eigenvalue of $\{(-1)^{a}q_{ax}\beta_{ax}\rho_{ax}+\alpha(q_{00}\beta_{00}\rho_{00}+q_{10}\beta_{10}\rho_{10})\}$. Similarly, $N^{10}_0$ corresponds to the eigenvector associated with the largest eigenvalue of $-\{(-1)^{a}q_{ax}\beta_{ax}\rho_{ax}+\alpha(q_{00}\beta_{00}\rho_{00}+q_{10}\beta_{10}\rho_{10})\}$. Since the operator inside the brackets acts on a two-dimensional space, it follows that $N^{00}_0$ is orthogonal to $N^{10}_0$. 
As Bob’s measurements have two outcomes, they must be projective. Hence, we can write  
$N^{00}_0=\ket{\chi}\bra{\chi}$ and $N^{10}_0=\ket{\chi^\perp}\bra{\chi^\perp}$.  
Projectivity further implies that $N^{00}_1=\ket{\chi^\perp}\bra{\chi^\perp}$ and, by the same reasoning, $N^{10}_1=\ket{\chi}\bra{\chi}$.  

Applying an analogous argument, if $N^{01}_0=\ket{\kappa}\bra{\kappa}$, then $N^{11}_0=\ket{\kappa^\perp}\bra{\kappa^\perp}$. Consequently, $N^{01}_1=\ket{\kappa^\perp}\bra{\kappa^\perp}$ and $N^{11}_1=\ket{\kappa}\bra{\kappa}$.
As all the measurement operators are projectors, the last two terms of \eqref{S_m} reduces to zero.

That proves that Bob’s strategy—reversing the outcome while using fixed measurements for both of Alice’s messages—is optimal.
\subsubsection*{Optimal strategy of Alice}
We can write \eqref{S_m} as follows:
\begin{widetext}
\bea
\mathcal{S}_{\phi^+}(\theta)
&=& 2q_{00}\beta_{00}\tr\left[\rho_{00}\left\{(1+\alpha)\left(\ket{\chi}\bra{\chi}-\ket{\chi^\perp}\bra{\chi^\perp}\right)+\ket{\kappa}\bra{\kappa}-\ket{\kappa^\perp}\bra{\kappa^\perp}\right\}\right]\nonumber\\
&&+ 2q_{10}\beta_{10}\tr\left[\rho_{10}\left\{-(1-\alpha)\left(\ket{\chi}\bra{\chi}-\ket{\chi^\perp}\bra{\chi^\perp}\right)-\ket{\kappa}\bra{\kappa}+\ket{\kappa^\perp}\bra{\kappa^\perp}\right\}\right]\nonumber\\
&&+2q_{01}\beta_{01}\tr\left[\rho_{01}\left\{\ket{\chi}\bra{\chi}-\ket{\chi^\perp}\bra{\chi^\perp}-\ket{\kappa}\bra{\kappa}+\ket{\kappa^\perp}\bra{\kappa^\perp}\right\}\right]\nonumber\\
&&+2q_{11}\beta_{11}\tr\left[\rho_{11}\left\{-\ket{\chi}\bra{\chi}+\ket{\chi^\perp}\bra{\chi^\perp}+\ket{\kappa}\bra{\kappa}-\ket{\kappa^\perp}\bra{\kappa^\perp}\right\}\right]
\eea  

\end{widetext}
In the above expression, we put the explicit form of Bob's measurements.
The best states of Alice would be the largest eigenvalues of the corresponding quantities of second brackets. The optimum $\rho_{ax}$ are the eigenvectors to the highest eigenvalues of corresponding terms. All the eigenvectors in this case are pure states. As the inputs span $2$-dimensional space, Alice must choose projective measurements. That implies $q_{ax}=1/2,\forall a,x$. Thus we can calculate the highest eigenvalues of each quantities and write the optimum success merit as,
\bea\label{opt}
\mathcal{S}_{\phi^+}(\theta)
&=& \beta_{00}\sqrt{\alpha^2+4(\alpha+1)|\la \chi|\kappa\ra|^2}\nonumber\\
&+& \beta_{10}\sqrt{\alpha^2+4(1-\alpha)|\la \chi|\kappa\ra|^2}\nonumber\\
&+&\beta_{01}2|\la\chi^\perp|\kappa \ra|+\beta_{11}2|\la\chi^\perp|\kappa \ra|.
\eea

Without changing the value of above quantity, we can always choose a unitary $U$ such that $U\ket{\chi}=\ket{0}$ and $U\ket{\kappa}=\ket{\kappa'}=\cos\frac{\eta}{2}\ket{0}+\sin\frac{\eta}{2}\ket{1}$. Therefore, \eqref{opt} reads as,

\bea\label{opt1}
\mathcal{S}_{\phi^+}(\theta)
&=& \beta_{00}\sqrt{\alpha^2+4(\alpha+1)\cos^2\frac{\eta}{2}}\nonumber\\
&+& \beta_{10}\sqrt{\alpha^2+4(1-\alpha)\cos^2\frac{\eta}{2}}\nonumber\\
&+&2|\sin\frac{\eta}{2}|.
\eea 
For the last term, we use the fact that $\beta_{01}+\beta_{11}=1$. For a given task, i.e., when the value of $\theta$ is fixed, one can always determine the maximum value of the success metric by optimizing the value of $\eta$ as the shared resource. If this value is less than $\sqrt{8+2\alpha^2}$, it implies that a non-maximally entangled state achieves a higher success probability than the maximally entangled state. For $\theta=\pi/3$, we can calculate $\mathcal{S}_{\phi^+}(\theta=\pi/3) =2.8298$, where as we already mention $\mathcal{S}_{EACC}(\theta=\pi/3)=3.0237$. Since Alice executes four projective measurements and the highest distinguishability of her four measurements is $1/2$ (By Theorem \ref{th1}), we conclude that a non-maximally entangled state can outperform a maximally entangled state as the shared resource.

For general $\theta$, we provide the following plot at Figure \ref{fig1} to demonstrate the advantage of non-maximally entangle probe.

\begin{figure}[h!]
    \centering \includegraphics[width=1.08\linewidth]{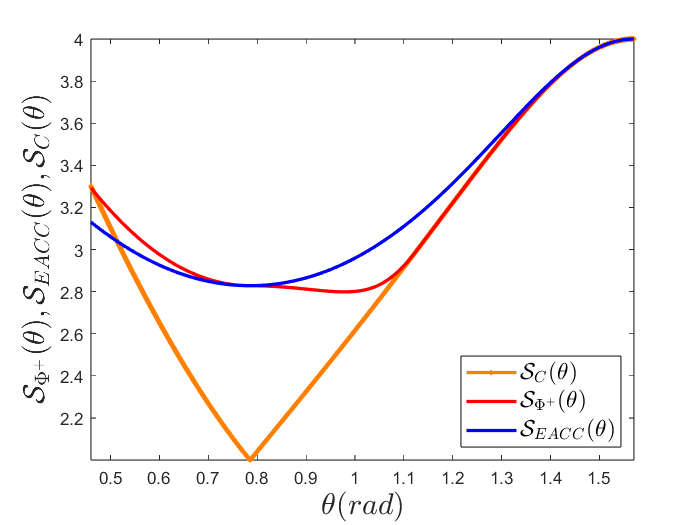}
    \caption{The plot illustrates the advantage of EACC over CC in the tilted $(2,2)$-RAC. The independent axis represents the parameter $\theta$ which is the tilted parameter of success merit, while the dependent axis shows the corresponding values of $\mathcal{S}_{\phi^+}(\theta), \mathcal{S}_{EACC}(\theta)$ and $\mathcal{S}_C(\theta)$. The plot clearly demonstrates that $\mathcal{S}_{EACC}(\theta) > \mathcal{S}_C(\theta)$ for the  $0.79$ rad$\leq\theta\leq$ 1.56 $rad$. It is interesting to see that for some region, maximally entangled state provides a greater advantage over the non-maximally entangled state. But we did not optimize over the strategies with non-maximally entangled states. Therefore, we can not comment about the superiority of maximally entangled state in those values of $\theta$.
}
    \label{fig1}
\end{figure}

\section{Conclusion}
In this work, we present a novel and systematic framework to investigate entanglement-assisted two-party communication tasks under constraints on the distinguishability of the sender’s inputs. Within this framework, we identify several communication tasks that demonstrate an advantage of both entanglement-assisted classical and quantum communication over classical communication supplemented with shared randomness. 
Traditional studies of communication complexity typically characterize such advantages under constraints on the dimension of the communicated message. In contrast, our distinguishability-based measure is particularly relevant in scenarios where preserving the privacy of the sender’s inputs is a primary concern.
Our main results establish an equivalence between quantum communication, entanglement-assisted classical communication, and entanglement-assisted quantum communication protocols. This equivalence further motivates us to introduce a new class of protocols in which both the distinguishability of inputs and the dimension of the communicated message are simultaneously constrained. Within this refined setting, we demonstrate an advantage for entanglement-assisted protocols even in the absence of any input for the receiver. This, in turn, highlights the superiority of entanglement-assisted classical and quantum communication over standard quantum communication. 
Finally, we identify a communication task in which a non-maximally entangled state outperforms a maximally entangled state as a pre-shared resource between the communicating parties.

An intriguing aspect of our study is that the distinguishability of general quantum processes, such as measurements \cite{Manna_111,Datta_2021,ziman} and channels \cite{acin2,piani,watrous,npj,manna2025,manna2026,anti_unit}, plays a central role in entanglement-assisted classical and quantum communication.

 Another important feature of our work arises from the alternative definition of distinguishability, which effectively leads to a collapse of communication complexity. This phenomenon is well known in the context of non-local games~\cite{Botteron,botteron2024,pierre,skrzypczyk,brassard,karol,Grudka}. A similar line of investigation can be carried out for non-local games under distinguishability constraints. Since, in most cases, the distinguishability of quantum channels does not admit a closed-form expression, developing tighter and more efficient numerical methods will be essential for advancing this line of research.  
In most of our results, we demonstrated the existence of advantageous protocols, although the achieved advantage is not optimal. It would therefore be worthwhile to formulate these tasks as semi-definite programs under explicit distinguishability constraints in order to characterize optimal performance. Related dimension-constrained formulations have been studied previously \cite{Vieira_2023,armin}, and extending those techniques to the present setting could yield deeper insights.  
Distinguishability is also highly relevant in the study of ontological models. In this context, the role of state discrimination has been extensively investigated \cite{leifer2014,barrett14,ray2025,chaturvedi2021,epistemic}; however, channel discrimination may play a similar role. Communication tasks of the type considered here could therefore provide a useful framework for exploring such questions. Entanglement in Bell-type settings has already been shown to be useful in several aspects of cryptography \cite{Ekert,bod15}. The framework considered here may therefore suggest new directions for the development of cryptographic protocols, particularly in light of the connection between distinguishability and the privacy of the sender's inputs discussed earlier.

\subsection*{Acknowledgement} This work is supported by STARS (STARS/STARS-2/2023-0809), Govt. of India. AP thanks UGC, India for the Junior Research Fellowship (reference no.: 221610057949)

\bibliography{ref}
\end{document}